\documentclass[11pt]{article}

\usepackage[T1]{fontenc}
\usepackage{lmodern, microtype}
\usepackage[left=1.3in, right=1.3in, top=1.25in, bottom=1.25in]{geometry}
\usepackage[small]{titlesec}
\usepackage{mathtools,comment,paralist}
\usepackage{mathtools}
\usepackage{epigraph}
\usepackage{bbm}
\usepackage{xcolor}
\usepackage{amssymb}
\usepackage{fancyhdr}

\usepackage{amssymb,amsmath,amsthm,fancyhdr,bbm}
\usepackage{tikz}
\usetikzlibrary{decorations.pathreplacing,arrows.meta}
\usetikzlibrary{decorations.pathreplacing,decorations.pathmorphing}
\usepackage{subcaption}

\usepackage{xcolor}
\usepackage{soul}
\usepackage{hyperref}
\hypersetup{
  colorlinks=true,
  linkcolor=blue,
  citecolor=red
}
\usepackage{natbib}
\usetikzlibrary{shapes.geometric}
\setcitestyle{authoryear,open={(},close={)}}
\makeatletter

\newtheorem{theorem}{Theorem}
\newtheorem{lemma}{Lemma}
\newtheorem*{lemma*}{Lemma}

\newtheorem*{axiom*}{Axiom}
\newtheorem{proposition}{Proposition}

\newtheorem{remark}{Remark}

\newtheorem*{theorem*}{Theorem}

\newtheorem{definition}{Definition}
\newtheorem*{definition*}{Definition}
\usepackage{graphicx}
\usepackage{pstricks, enumerate, pst-node, pst-text, pst-plot}

\newcommand{\R}{\mathbb{R}}

\newcommand{\N}{\mathbb{N}}

\newcommand{\cX}{\mathcal{X}}
\newcommand{\cY}{\mathcal{Y}}

\usepackage{accents}

\def\finset{\mathcal{P}_f}

\def\ee{\mathrm{e}}

\def\cA{\mathcal{A}}

\def\cM{\mathcal{M}}

\def\Pr{\mathbb{P}}

\def\eps{\varepsilon}

\def\dmax{d_{\mathrm{max}}}

\def\Var{\mathrm{Var}}
\def\L{\mathcal{L}}
\def\U{\mathcal{U}}

\DeclareMathOperator*{\argmin}{\arg\min}

\usepackage{xparse}
\DeclareDocumentCommand\Pr{ m g }{\ensuremath{
    {   \IfNoValueTF {#2}
      {\mathbb{P}\left[{#1}\right]}
      {\mathbb{P}\left[{#1}\middle\vert{#2}\right]}%
    }
}}
\DeclareDocumentCommand\E{ m g }{\ensuremath{
    {   \IfNoValueTF {#2}
      {\mathbb{E}\left[{#1}\right]}
      {\mathbb{E}\left[{#1}\middle\vert{#2}\right]}%
    }
}}
\newcommand{\tv}[1]{d_{\mathrm{TV}}({#1})}

\def\ee{\mathrm{e}}

\def\pub{m}
\def\pri{m}

\newcommand{\olya}[1]{{\color{blue}{(\textbf{Olya:} #1)}}}

\title{Local Coordination and the Geometry of Social Networks}

\author{Tom Hutchcroft \and Olga Rospuskova \and Omer Tamuz\thanks{Caltech. Tom Hutchcroft was supported by a National Science Foundation award (DMS-2246494) and a Packard Fellowship for Science and Engineering. Olga Rospuskova was supported by a Citadel fellowship. Omer Tamuz was supported by a National Science Foundation CAREER award (DMS-1944153) and by a MURI award (N000142412742). We would like to thank Mira Frick, Ryota Iijima and G\'abor Pete for stimulating discussions. }}

\date{\today}

\begin{document}

\maketitle
\begin{abstract}
   We study agents playing a pure coordination game on a large social network. Agents are restricted to coordinate locally, without access to a global communication device, and so different regions of the network will converge to different actions, precluding perfect coordination. We show that the extent of this inefficiency depends on the network geometry: on some networks, near-perfect efficiency is achievable, while on others welfare is strictly bounded away from the optimum. We provide a geometric condition on the network structure that characterizes when near-efficiency is attainable. On networks in which it is unattainable, our results more generally preclude high correlations between outcomes in a large spectrum of dynamic games.
\end{abstract}

\section{Introduction}

Pure coordination problems involve a decision between alternatives that are a priori equivalent, and where the only motivation is to match the choices of others. For example, the choice of a weekly day of rest has obvious coordination incentives, and one can plausibly assume that a priori it does not matter which day of the week it is, assuming everyone chooses the same one. Other examples include the choice of outlet plug standard, the side of the street to drive on, and the term used to describe a new concept or object.

We consider agents who have to coordinate  with their social network neighbors. When the network is large and coordination is local, one cannot hope that the entire population will make the same choice, and so inefficiencies must arise. The question we ask is: on which social networks is it possible to achieve low inefficiency? In particular, what geometric property of the network determines the range of possible outcomes?

We show that low inefficiency is very closely tied to the \emph{amenability}  of the social network graph.\footnote{The term \emph{hyperfiniteness} is often used to describe this property in the math literature; see more in the related literature discussion.} This is a well-studied graph property, which roughly means that the graph can be divided into distinct communities whose size is limited by the locality of interactions, with relatively few edges between communities. We show that on amenable graphs agents can coordinate well in equilibrium (Theorems~\ref{thm:amenable}, \ref{thm:amenable-no-comm} and \ref{thm:amenable-transitive}). On non-amenable graphs, we show that near-efficient coordination is impossible (Theorem~\ref{thm:non-amenable}). Indeed, we show that it is impossible not just in equilibrium, but in any strategy profile.

We consider a setting in which agents can communicate locally: there is some radius of communication $r$, and agents can exchange messages with their neighbors, neighbors of neighbors, etc, up to distance $r$.  This can be viewed as a reduced form model that can capture the end state of a more detailed process, perhaps involving multiple rounds of communication with direct neighbors, which  allows messages to travel some distance.

We initially assume that agents have the ability to send private messages that are not observed by others. On amenable graphs, this suffices for the existence of equilibria with low inefficiency (Theorem~\ref{thm:amenable}). We later consider the case that messages are (locally) public, i.e., observed by all agents up to distance $r$ (e.g., posting a yard sign).  Eliminating private communication still allows for low inefficiency, but with some loss of efficiency (Theorem~\ref{thm:amenable-no-comm}). 

In non-amenable graphs, inefficiency is high, even when private communication is allowed, and indeed even when incentives are not a constraint. Our general result in this setting (Theorem~\ref{thm:non-amenable-general}) is thus also a contribution to the study of stochastic processes on non-amenable graphs, which is an active area of research. %This result has a distributed computing interpretation: no distributed algorithm running on the nodes of a graph using local information can yield highly correlated outputs when the graph is non-amenable. 

The low-inefficiency equilibria we construct on amenable graphs have a simple structure: agents are divided into communities, each community has a leader, and each leader messages the community with instructions determining the action they coordinate on. We call such an equilibrium a leader equilibrium. When communities are insular there are no incentive issues. However, when agents have neighbors who are members of a different community one has to make sure they have no incentive to deviate and follow the other leader's instructions. This issue can be solved by private communication between leaders and their communities (Theorem~\ref{thm:amenable}).  When private communication is not allowed, these issues can be overcome by a careful construction of communities that are sufficiently insular (Theorem~\ref{thm:amenable-no-comm}). These results show that on amenable graphs there are efficient leader equilibria. The converse is also true: the existence of an efficient leader equilibrium implies that the graph is amenable.

We show that inefficiency must be high on non-amenable graphs, and not just in leader equilibria. To this end, we demonstrate that if a graph admits a strategy profile that  achieves low inefficiency, then there is a leader equilibrium that also achieves low inefficiency. This shows that this simple coordination device is (almost) as powerful as any other.

The main technical tool we use to prove this result is what we call the \emph{Shapley influence distribution} \citep{owen2014sobol, owen2017shapley}: a measure of how much a variable affects the outcome of a function that depends on several independent inputs. This distribution is given by the Shapley values of a cooperative game played by the input variables that determine the function's output. We show that the Shapley influence distribution has a Lipschitz property: when applied to two similar functions, it yields similar distributions.

In any equilibrium (or indeed, in any strategy profile) we can view a player's action as a function of the independent inputs given by the messages of the players within the radius of communication. We apply the Shapley influence distribution to this function to yield, for that player, a distribution over the players around them. This will be the distribution used by this player to choose a leader. Intuitively, this means that in the constructed leader equilibrium players are more likely to choose as leaders those in their neighborhood that are more influential to their decisions. The Lipschitz property ensures that agents tend to choose the same leaders as their neighbors, resulting in low inefficiency. 

\paragraph{Related literature.}

The study of coordination has a long history in economics and game theory, including in both cooperative and non-cooperative game theory; a complete survey is beyond the scope of this paper. Coordination on social networks has also been studied extensively, starting with \cite{schelling1969models, schelling1971dynamic}. Perhaps the closest paper to ours is \cite{morris2000contagion}, who studies contagion in a two-action, repeated local interaction model. Agents live on an infinite graph and repeatedly best-respond to their neighbors' actions, myopically maximizing payoffs in a coordination game. The question is: starting from a finite region in which all agents play one action, when will best responses eventually make that action spread to the entire graph? A key message is that this depends on the graph's geometry: slow neighborhood growth supports contagion, reflecting the amenability intuition that communities should have relatively small boundaries for efficient coordination. 

Earlier, \cite{ellison1993learning} studied a similar two-action coordination problem, but in a large finite population under a noisy best-response dynamics. He highlights that coordination outcomes depend on the interaction structure. When coordination is local, the agents converge quickly to the same action, whereas under uniform interactions convergence is extremely slow. A number of other papers follow a similar path; see \cite{weidenholzer2010coordination} for a survey of best-response type models of coordination dynamics. This literature is closely tied to the study of interacting particle systems in probability, which originated with the classical Ising model \citep{ising1925beitrag} and is very active to this day  \citep[see, e.g., a survey by][]{durrett2025interacting}. Both \cite{morris2000contagion} and \cite{ellison1993learning} focus on the long-term outcomes, after repeated interactions have percolated throughout the network, rather than the local interactions we are interested in.

More recent contributions to the coordination on networks literature include  \cite{sadler2020diffusion}, who analyzes diffusion from a seed, where global spread depends on whether adoption opportunities percolate through a giant connected cluster, captured by a branching-factor condition. \cite{chwe2000communication} shows that in a collective action setting, coordination hinges on the network's ability to generate sufficiently strong higher-order beliefs, via a hierarchical communication structure. In his geometric threshold model, dense local overlap facilitates coordination in low-dimensional networks, while weak ties become important mainly once the network is very dense. \cite{pkeski2025random} studies random-threshold coordination and shows that in large networks equilibria admit a simple cutoff structure. Its existence is guaranteed under a small influence condition, where each agent's payoff is not driven by any single neighbor.

Beyond coordination, there is a very large literature that studies how the geometry of the social network affects outcomes in various strategic settings. A very partial list of examples includes \cite{jackson2007diffusion}, who study how network structure affects the diffusion of behavior; \cite{ballester2006s} and \cite{bramoulle2014strategic}, who consider quadratic games with linear best-responses, and show that spectral properties of the adjacency matrix determine equilibrium behavior; and \cite{golub2012homophily}, \cite{acemoglu2011bayesian} and \cite{mossel2015strategic}, who study how network geometry affects social learning outcomes.

Several inequivalent notions of amenability have appeared in the mathematics literature. The notion we use is equivalent to so-called \emph{hyperfiniteness} as introduced by \cite{elek2006combinatorial}; see also \cite{MR2372897}. This notion of (non)amenability could also reasonably be referred to as \emph{statistical (non)amenability} as it concerns a property of graphs determined by the \emph{average} geometry at a random vertex and is robust to the presence of rare outlying regions with atypical geometry. This notion of amenability has been studied extensively by mathematicians in the setting of \emph{unimodular random rooted graphs} \citep{MR2354165} and \emph{measured equivalence relations} \citep{kechris2004topics}, the latter being a central topic in modern descriptive set theory.

The concept of amenability was first introduced by \cite{neumann1929allgemeinen}, who originally defined it as a property of groups, with the goal of understanding the origins of the Banach-Tarski paradox. Classically, a graph is said to be amenable if it contains finite sets of vertices with arbitrarily small surface-to-volume ratio; for groups, being amenable in the sense of von Neumann is equivalent to having Cayley graphs that are amenable in this sense. Finite graphs are always amenable in this sense since the entire vertex set has no boundary. The most classical non-vacuous notion of non-amenability in the finite context is that of being an \emph{expander graph}: Given $\varepsilon>0$, a finite graph is said to be an $\varepsilon$-expander if every set containing at most half the vertices has surface-to-volume ratio at least $\varepsilon$. Such finite expander graphs have many applications in computer science and algorithm design \citep[see, e.g.,][]{hoory2006expander}. Being an expander graph might reasonably be referred to as \emph{uniform non-amenability} in contrast to our statistical notion, and is a strictly stronger property than being non-amenable in our sense. For example, the giant cluster of a random graph on $n$ vertices with average degree $d>1$ is not an expander with high probability when $n$ is large due to the presence of rare ``bad regions'', but is non-amenable in our statistical sense. The same is true for many ``small-world'' models of social networks \citep{watts1998collective}, making our (statistical) notion of non-amenability the more relevant concept in many applications. The two notions are equivalent for transitive graphs \citep{benjamini1999group}, that is, graphs in which any two vertices are related by a symmetry.
Analogues of our main probabilistic theorem (Theorem~\ref{thm:non-amenable-general}) for transitive graphs have been proven using spectral methods in e.g.\  \cite{backhausz2017spectral} and \cite{hutchcroft2023continuity}, but these proof methods do not apply under our notion of (statistical) non-amenability. \cite*{csoka2020entropy} prove a version of this theorem for general (non-transitive) graphs, under a uniform non-amenability assumption (see Remark~\ref{rem:virag}).

% The definition was naturally adapted to infinite graphs \cite{folner?}, and later to finite graphs \cite{?} where it is known as hyperfiniteness \cite{?}... Relation to  descriptive set theory \cite{?} and to distributed computing \cite{?}

\section{Model}

Let $N$ be a finite set of agents. The agents are connected by a social network whose graph is $G=(N,E)$, where $(i,j) \in E \subseteq N \times N$ signifies that  $i$ is a neighbor of $j$. We assume that the graph is undirected, i.e., $(i,j) \in E$ if and only if $(j,i) \in E$. By convention, we set $(i,i) \not  \in E$. We denote the neighborhood of $i \in N$ by $N_i = \{j \,:\, (i,j) \in E\}$. The degree of agent $i$ is the size of $N_i$, and we denote by $\dmax$ the maximum degree of all agents. We think of $\dmax$ as being small: each agent has many fewer neighbors than there are agents in the network.

We say that $i$ is connected to $j$ if there is a sequence of agents starting in $i$ and ending in $j$ such that each subsequent pair are in each other's neighborhood. The connected component of $i$ is the set of agents to whom $i$ is connected. We denote by $B_r(i) \subseteq N$ the set of agents who are at distance at most $r$ from $i$ in the social network. I.e., the agents to whom $i$ is connected by a sequence of agents of length at most $r$. We refer to this set as the $i$'s $r$-neighborhood. 

Each agent has to choose an action in $A = \{-1,+1\}$. The utility of player $i$ for an action profile $a \in A^N$ is given by
\begin{align}
\label{eq:one-shot}
    u_i(a) = -\sum_{j \in N_i}| a_i - a_j|.
\end{align}

This is a pure coordination game in which each agent wants to match its neighbors, receiving a payoff of $0$ for each match and a payoff of $-2$ for each mismatch. All of our results extend to similar coordination games with more actions; we opt to focus on the two action case for simplicity of notation and exposition.

We would like to capture the idea that in order to coordinate, agents can communicate locally and pass information among themselves, but that information cannot travel too far in the network. To this end, we allow for a round of local communication, after which agents choose their actions. 

In the communication stage, each agent $i$ can send each agent $j\in B_r(i)$ a private
message $\pri_{i,j}$. Messages sent
directly to agents within distance $r$ can succinctly model a more complicated
process, e.g., one in which some message from agent $i$ is communicated to the
agents in $B_r(i)$ through chains of exchanges between direct neighbors.

The set of possible messages is denoted by $\cM$. We allow $\cM$ to be large (uncountable), placing no restriction on the complexity of the message. For reasons that will become clear soon, we put a particular structure on $\cM$: messages are infinite strings over a finite alphabet that includes the two actions $\{-1,+1\}$ as letters, as well as the letter $\emptyset$.\footnote{Let $\cA$ be a finite alphabet that includes $\{-1,+1\}$ and $\emptyset$. Let $\cM = \cA^{\N}$, so that a message is an infinite string written using the alphabet $\cA$. Given a letter $\alpha \in \cA$, we identify the string $(\alpha,\emptyset,\emptyset,\ldots)$ with $\alpha$ itself. In particular, $(-1,\emptyset,\emptyset,\ldots)$ is the message $-1$, and likewise  $(+1,\emptyset,\emptyset,\ldots)$ is the message $+1$. The message $(\emptyset,\emptyset,\ldots)$ is the message $\emptyset$, which we call the empty message.}

Agents send private messages simultaneously. After receiving them, they choose their actions simultaneously and receive payoffs according to \eqref{eq:one-shot}. Formally, there are three periods, in which each agent $i$
\begin{enumerate}
    \item sends a private message $\pri_{i,j}$ to each  $j \in B_r(i)$;
    \item learns the private messages $\pri_{\rightarrow i} = (\pri_{j,i})_{j \in B_r(i)}$;
    \item chooses an action $a_i$.
\end{enumerate}
This defines an extensive-form game with simultaneous moves. A pure strategy of agent $i$ is a tuple $\sigma_i = (\pri_{i\rightarrow}, o_i)$, where  $\pri_{i\rightarrow}=(\pri_{i,j})_{j \in B_r(i)}$ are $i$'s private messages, and $o_i$ is a measurable map that assigns to every realization of received messages $\pri_{\rightarrow i}$ an action in $\{-1,+1\}$. We use the same notation to denote mixed strategies, in which case strategies $(\pri_{i\rightarrow}, o_i)_{i \in N}$ will be independent random variables, all defined on the same probability space. Within this probability space, we will denote by $a_i=o_i(\pri_{\rightarrow i})$ the (random) action taken by $i$.\footnote{In the constructions below, we sometimes describe strategies behaviorally, by specifying
randomization at each information set; throughout, these descriptions should
be understood as the equivalent mixed strategies over complete contingency plans.}

We will be interested in mixed equilibria of this game. In particular, we will be interested in the average welfare of this game, i.e., in 
\begin{align*}
    W = -\frac{1}{|E|}\sum_{(i, j)\in E}|a_i- a_j|.
\end{align*}
This is the payoff averaged over all ordered pairs $(i,j)$ of agents who are neighbors. It will be more convenient to study the average inefficiency
\begin{align*}
    I = -W = \frac{2}{|E|}\sum_{(i, j)\in E} \mathbbm{1}_{a_i\neq a_j},
\end{align*}
that is, twice the fraction of pairs of neighbors that mismatch. 
This is equal to $0$ if all agents coordinate, and to $2$ if all agents miscoordinate. 

For any graph, there are two equilibria that achieve first-best: the one in which agents ignore all information and all choose the action $+1$, and another in which they all choose $-1$. We are interested in how agents might coordinate, rather than in the fact that these are stable courses of action; assuming that agents play one of these equilibria relegates the question of coordination on actions to that of coordination on equilibria. Instead, we would like to model a setting in which the agents need to interact and exchange information in order to coordinate. 

To this end, we will define \emph{action-symmetric} strategy profiles. A strategy is action-symmetric if it is invariant to the operation that renames the two actions in the strategy. Playing an action-symmetric strategy captures an a priori indifference to the two actions, making coordination an interesting problem. We will not restrict players to play action-symmetric strategies, but instead study equilibria in which players have no incentive to deviate away from action-symmetry.\footnote{Formally, we define the negation operator on the message alphabet $\cA$ by having it map $-1$ to $+1$ and vice versa, and mapping each other letter to itself. This operator is extended in the obvious way to the set of messages $\cM=\cA^{\N}$, so that the negation of $m = (\alpha_1,\alpha_2,\ldots)\in \cM$ is $-m=(-\alpha_1,-\alpha_2,\ldots) \in \cM$.  A tuple of messages (e.g., a message received from each neighbor) is negated in the same way. \par Let $\cX,\cY$ be spaces on which negation is defined (e.g., $\{-1,+1\},\cA,\cM,B_r(i)^{\cM}$). Let $f \colon X \to Y$ be a function (in particular, we will be interested in $o_i$). We define the  function $\iota(f) \colon X \to Y$ by $[\iota(f)](x)=-f(-x)$. Given a pure strategy $\sigma_i = (\pri_{i\rightarrow},o_i)$, we define  $\iota(\sigma_i) = (-\pri_{i\rightarrow},\iota(o_i))$. We say that $i$'s mixed strategy is action-symmetric if it is $\iota$-invariant: $\sigma_i$ has the same distribution as $\iota(\sigma_i)$. Equivalently, if we denote by $A_i$ the set of strategies of $i$, then $\iota$ is a bijection from $A_i$ to $A_i$, and a probability measure $\mu_i$ over $A_i$ (i.e., a mixed strategy for player $i$) is action-symmetric if for any measurable $E \subseteq A_i$ it holds that $\mu_i(E) = \mu_i(\iota(E))$. }

In a one-shot game without communication, the only action-symmetric equilibrium is the one in which all agents choose each action with probability one half. Indeed, this is the only action-symmetric strategy profile. In the extensive form game with communication, agents can use messages to break ties and coordinate without a prior preference to either action. 

Note that if all players use action-symmetric strategies, then the probability that $a_i=1$ is one-half for all players $i$. In fact, this would suffice as an assumption for all our results: namely, we can replace the action-symmetry assumption with the assumption that strategy profiles are such that each agent takes each action with probability one half. We adhere to our definition as it is a direct assumption on each agent's strategy rather than a joint assumption about the whole strategy profile.

\begin{comment}

Let $\cX,\cY$ be spaces on which negation is defined (e.g., $\{-1,+1\},\cA,\cM,B_r(i)^{\cM}$).
Let $f \colon X \to Y$ be a function (in particular, we will be interested in $o_i$). We define the  function $\iota(f) \colon X \to Y$ by $[\iota(f)](x)=-f(-x)$. For example, suppose $X=\{-1,+1,\emptyset\}$ and $Y=\{-1,+1\}$, so that we can think of $f$ as an action rule that assigns an action to a received message. If $f$ is the constant $+1$, then $\iota(f)$ is the constant $-1$. More generally, if $f(\emptyset)=+1$, then $[\iota(f)](\emptyset)=-1$. If $f(\alpha)=\alpha$ for $\alpha\in\{-1,+1\}$, then $[\iota(f)](\alpha)=f(\alpha)$ for such $\alpha$; likewise, if $f(\alpha)=-\alpha$ for $\alpha\in\{-1,+1\}$, then $[\iota(f)](\alpha)=f(\alpha)$ for such $\alpha$.

Given a pure strategy $\sigma_i = (\pri_{i\rightarrow},o_i)$, we define  $\iota(\sigma_i) = (-\pri_{i\rightarrow},\iota(o_i))$. We say that $i$'s mixed strategy is action-symmetric if it is $\iota$-invariant: $\sigma_i$ has the same distribution as $\iota(\sigma_i)$. Equivalently, if we denote by $A_i$ the set of strategies of $i$, then $\iota$ is a bijection from $A_i$ to $A_i$, and a probability measure $\mu_i$ over $A_i$ (i.e., a mixed strategy for player $i$) is action-symmetric if for any measurable $E \subseteq A_i$ it holds that $\mu_i(E) = \mu_i(\iota(E))$. Intuitively, action-symmetry is invariance to the renaming of the two actions, capturing the notion that players are agnostic to the actions' names.
\end{comment}

Coordination is possible even when all players use action-symmetric strategies. For a simple example, suppose that $r$ is equal to the diameter of the graph (i.e., $j \in B_r(i)$ for all $i,j \in N$), so that every player can communicate with every player. Then there is an equilibrium in which players coordinate perfectly: fix a player $\ell \in N$ whom we call the ``leader''. The leader $\ell$ sends the same private message $\pri_{\ell,i}$, chosen uniformly at random from $\{-1,+1\}$, to every other agent. The leader takes this action, and every other agent follows the leader by choosing $a_i=\pri_{\ell,i}$.  Thus, our communication structure allows for perfect coordination when $r$ is large enough. The question becomes more interesting when $r$ is much smaller than the extent of the graph. 

Regardless of the graph, there is always an equilibrium in which the agents ignore all information and choose $a_i$ uniformly at random. In this case, the expected inefficiency will be high (it will, in fact, be equal to one). The main question of this paper is the following: Given $\eps>0$ and a radius of communication $r$, for which graphs is there an equilibrium that achieves expected inefficiency at most $\eps$?

\subsection{Amenable graphs}
\label{sec:amenable}
The answer to this question turns out to be closely related to a property of graphs called \emph{amenability}. In this section we initially discuss amenability of infinite graphs, where it is simpler to define, before returning our focus to finite graphs. 

To explain this idea, we first consider our game on the line graph, where $N = \mathbb{Z}$ and $j \in N_i$ if $|i-j|=1$ (see Figure~\ref{fig:Z-1d}). We fix some large radius of communication $r$. Notice that it is impossible to achieve perfect coordination: if $i$ and $j$ are more than $2r$ apart, their actions must be statistically independent of each other in any action-symmetric profile, and so it is impossible that all agents take the same action. Nevertheless, it is possible to achieve low inefficiency. 

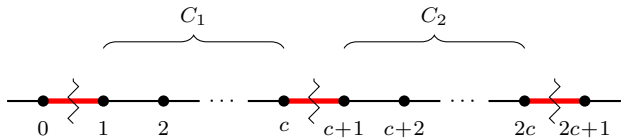
\begin{figure}[h]
        \centering
  \begin{tikzpicture}[x=0.8cm,       
    vertex/.style={circle,fill=black,inner sep=1.5pt},
 font=\scriptsize  
]

% main vertices
\coordinate (m1)      at (0,0);   % -1
\coordinate (zero)    at (1,0);   % 0
\coordinate (one)     at (2,0);   % 1
\coordinate (rminus1) at (4,0);   % r-1
\coordinate (r)       at (5,0);   % r
\coordinate (rplus1)  at (6,0);   % r+1
\coordinate (tworm1)  at (8,0);   % 2r-1
\coordinate (twor)    at (9,0);   % 2r

% small extra segments before -1 and after 2r
\coordinate (L) at (-0.6,0);
\coordinate (R) at (9.6,0);

% symmetric break points around the dots
\coordinate (aL) at (2.6,0);
\coordinate (aR) at (3.4,0);
\coordinate (bL) at (6.6,0);
\coordinate (bR) at (7.4,0);

% boundary edges in red
\draw[line width=2pt,red] (m1) -- (zero);
\draw[line width=2pt,red] (rminus1) -- (r);
\draw[line width=2pt,red] (tworm1) -- (twor);

% black segments (including stubs at ends), with symmetric gaps
\draw[thick] (L) -- (m1);
\draw[thick] (zero) -- (one);
\draw[thick] (one) -- (aL);
\draw[thick] (aR) -- (rminus1);

\draw[thick] (r) -- (rplus1);
\draw[thick] (rplus1) -- (bL);
\draw[thick] (bR) -- (tworm1);
\draw[thick] (twor) -- (R);

% dots centered in the gaps
\node[draw=none,fill=none] at (3,0) {$\cdots$};
\node[draw=none,fill=none] at (7,0) {$\cdots$};

% vertices
\node[vertex] at (m1)      {};
\node[vertex] at (zero)    {};
\node[vertex] at (one)     {};
\node[vertex] at (rminus1) {};
\node[vertex] at (r)       {};
\node[vertex] at (rplus1)  {};
\node[vertex] at (tworm1)  {};
\node[vertex] at (twor)    {};

% labels
\node[below=4pt] at (m1)      {$0$};
\node[below=4pt] at (zero)    {$1$};
\node[below=4pt] at (one)     {$2$};
\node[below=4pt] at (rminus1) {$c$};
\node[below=4pt] at (r)       {$c\!+\!1$};
\node[below=4pt] at (rplus1)  {$c\!+\!2$};
\node[below=4pt] at (tworm1)  {$2c$};
\node[below=4pt] at (twor)    {$2c\!+\!1$};

% braces for communities
\draw[decorate,decoration={brace,amplitude=5pt}]
  (1,0.7) -- (4,0.7)
  node[midway,above=5pt] {$C_1$};

\draw[decorate,decoration={brace,amplitude=5pt}]
  (5,0.7) -- (8,0.7)
  node[midway,above=5pt] {$C_2$};

% vertical zigzag "breakers" on boundary edges
\draw[thin,decorate,decoration={zigzag,segment length=8pt,amplitude=2pt}]
  (0.5,-0.3) -- (0.5,0.3);   % between -1 and 0

\draw[thin,decorate,decoration={zigzag,segment length=8pt,amplitude=2pt}]
  (4.5,-0.3) -- (4.5,0.3);   % between r-1 and r

\draw[thin,decorate,decoration={zigzag,segment length=8pt,amplitude=2pt}]
  (8.5,-0.3) -- (8.5,0.3);   
\end{tikzpicture}      
\captionsetup{width=0.9\textwidth}
        \caption{$\mathbb{Z}$, amenable. Communities have length $c=2r+1$. Miscoordination occurs only along the red edges, which form a small fraction of the edges.}
        \label{fig:Z-1d}
\end{figure}

To this end, we divide the line into communities of length $c=2r+1$: $C_1 = \{1,\ldots,c\}$, $C_2 = \{c+1,\ldots,2c\}$, etc. The leader of $C_k$ will be $\ell_k = k(2r+1)-r$, the agent in the middle of the interval who is within the distance $r$ to all agents in $C_k$. This agent will send the same private message to the entire community and will take the action equal to this message. The agents who are not leaders will send empty messages and will choose the action equal to their leader's message. 

Agents who are within the same community will choose the same action, and so miscoordination will only occur on the boundary of the communities. This is an equilibrium: agents who are not on the boundary will lose payoff if they deviate, as both of their neighbors will take the same recommended action. Agents who are on the boundary of a community likewise will face a lower expected payoff from deviating, since they know which action their community neighbor will take but do not know which action will be taken by their neighbor across the community border.

Since miscoordination occurs only on the community borders, and since there it occurs with probability one half, the inefficiency will be $2/(2c) = 1/(2r+1)$. In particular, when $r$ is large, the inefficiency is small. Clearly, on a finite graph that within radius $r$ looks like this graph (i.e., a cycle of length much longer than $r$) the same conclusion applies, with low inefficiency for large $r$.

A similar equilibrium can be constructed on the two dimensional grid (see Figure~\ref{fig:Z2-2d}). Here, the communities will be $(r+1)\times(r+1)$ squares, where $r$ is even, and therefore have radius $r$  (the radius of a community is the minimum $r$ such that the entire community is contained in the ball of radius $r$ of some member of the community), and again inefficiency will be small when $r$ is large (this time $1/(r+1)$, so again of order $1/r$). As with the line, the same conclusion applies to finite graphs that locally look like the grid.

\begin{figure}[h!]
        \centering
\tikzset{every picture/.style={line width=0.75pt}}  
\tikzset{cutedge/.style={line width=1.2pt, red}}     
\begin{tikzpicture}[x=0.75pt,y=0.75pt,yscale=-0.9,xscale=0.9,font=\scriptsize]
%Shape: Grid [id:dp6414096588376714] 
\draw  [draw opacity=0] (246.67,256.67) -- (246.69,372.35) -- (129,372.37) -- (128.98,256.69) -- cycle ;
\draw  [color={rgb, 255:red, 200; green, 200; blue, 200 }  ,draw opacity=1 ]
 (246.67,256.67) -- (128.98,256.69)
 (246.67,276.67) -- (128.98,276.69)
 (246.68,296.67) -- (128.99,296.69)
 (246.68,316.67) -- (128.99,316.69)
 (246.68,336.67) -- (128.99,336.69)
 (246.69,356.67) -- (129,356.69) ;
\draw  [color={rgb, 255:red, 200; green, 200; blue, 200 }  ,draw opacity=1 ]
 (246.67,256.67) -- (246.69,372.35)
 (226.67,256.67) -- (226.69,372.35)
 (206.67,256.68) -- (206.69,372.35)
 (186.67,256.68) -- (186.69,372.36)
 (166.67,256.68) -- (166.69,372.36)
 (146.67,256.69) -- (146.69,372.36) ;

%Shape: Rectangle [id:dp36081915861183256] 
\draw  [color={rgb, 255:red, 255; green, 255; blue, 255 }  ,draw opacity=1 ][fill=white] (121.18,285.68) -- (276.18,285.68) -- (276.18,307.68) -- (121.18,307.68) -- cycle ;
%Shape: Rectangle [id:dp5451696955868162] 
\draw  [color={rgb, 255:red, 255; green, 255; blue, 255 }  ,draw opacity=1 ][fill=white] (216.99,239.08) -- (218.37,394.08) -- (196.37,394.27) -- (194.99,239.28) -- cycle ;

% red cut edges (right outer, bottom outer, etc.)
\draw[cutedge]   (365.67,256.33) -- (382.67,256.69) ;
\draw[cutedge]   (365.67,276.33) -- (382.67,276.69) ;
\draw[cutedge]   (365.68,316.33) -- (382.68,316.69) ;
\draw[cutedge]   (365.68,336.33) -- (382.68,336.69) ;
\draw[cutedge]   (364.68,356.33) -- (381.68,356.69) ;

\draw[cutedge]   (146.69,356.69) -- (147,374.37) ;
\draw[cutedge]   (166.69,356.68) -- (167,374.36) ;
\draw[cutedge]   (186.69,356.68) -- (187,374.36) ;
\draw[cutedge]   (226.69,356.67) -- (227,374.35) ;
\draw[cutedge]   (246.69,356.67) -- (247,374.35) ;

%Shape: Grid [id:dp043163328109357635] 
\draw  [draw opacity=0] (246.67,139.67) -- (246.69,255.35) -- (129,255.37) -- (128.98,139.69) -- cycle ;
\draw  [color={rgb, 255:red, 200; green, 200; blue, 200 }  ,draw opacity=1 ]
 (246.67,139.67) -- (128.98,139.69)
 (246.67,159.67) -- (128.98,159.69)
 (246.68,179.67) -- (128.99,179.69)
 (246.68,199.67) -- (128.99,199.69)
 (246.68,219.67) -- (128.99,219.69)
 (246.69,239.67) -- (129,239.69) ;
\draw  [color={rgb, 255:red, 200; green, 200; blue, 200 }  ,draw opacity=1 ]
 (246.67,139.67) -- (246.69,255.35)
 (226.67,139.67) -- (226.69,255.35)
 (206.67,139.68) -- (206.69,255.35)
 (186.67,139.68) -- (186.69,255.36)
 (166.67,139.68) -- (166.69,255.36)
 (146.67,139.69) -- (146.69,255.36) ;

% white rectangles clearing labels
\draw  [color=white,draw opacity=1,fill=white] (121.18,168.68) -- (276.18,168.68) -- (276.18,190.68) -- (121.18,190.68) -- cycle ;
\draw  [color=white,draw opacity=1,fill=white] (216.99,122.08) -- (218.37,277.08) -- (196.37,277.27) -- (194.99,122.28) -- cycle ;

% more red cuts (top outer, vertical outer, etc.)
\draw[cutedge]   (365.67,139.33) -- (382.67,139.69) ;
\draw[cutedge]   (365.67,159.33) -- (382.67,159.69) ;
\draw[cutedge]   (365.68,199.33) -- (382.68,199.69) ;
\draw[cutedge]   (365.68,219.33) -- (382.68,219.69) ;
\draw[cutedge]   (364.68,239.33) -- (381.68,239.69) ;

\draw[cutedge]   (146.69,239.69) -- (147,257.37) ;
\draw[cutedge]   (166.69,239.68) -- (167,257.36) ;
\draw[cutedge]   (186.69,239.68) -- (187,257.36) ;
\draw[cutedge]   (226.69,239.67) -- (227,257.35) ;
\draw[cutedge]   (246.69,239.67) -- (247,257.35) ;

%Shape: Grid [id:dp1431789966799044] 
\draw  [draw opacity=0] (364.67,256.67) -- (364.69,372.35) -- (247,372.37) -- (246.98,256.69) -- cycle ;
\draw  [color={rgb, 255:red, 200; green, 200; blue, 200 }  ,draw opacity=1 ]
 (364.67,256.67) -- (246.98,256.69)
 (364.67,276.67) -- (246.98,276.69)
 (364.68,296.67) -- (246.99,296.69)
 (364.68,316.67) -- (246.99,316.69)
 (364.68,336.67) -- (246.99,336.69)
 (364.69,356.67) -- (247,356.69) ;
\draw  [color={rgb, 255:red, 200; green, 200; blue, 200 }  ,draw opacity=1 ]
 (364.67,256.67) -- (364.69,372.35)
 (344.67,256.67) -- (344.69,372.35)
 (324.67,256.68) -- (324.69,372.35)
 (304.67,256.68) -- (304.69,372.36)
 (284.67,256.68) -- (284.69,372.36)
 (264.67,256.69) -- (264.69,372.36) ;

\draw  [color=white,draw opacity=1,fill=white] (239.18,285.68) -- (394.18,285.68) -- (394.18,307.68) -- (239.18,307.68) -- cycle ;
\draw  [color=white,draw opacity=1,fill=white] (334.99,239.08) -- (336.37,394.08) -- (314.37,394.27) -- (312.99,239.28) -- cycle ;

\draw[cutedge]   (247.67,256.33) -- (264.67,256.69) ;
\draw[cutedge]   (247.67,276.33) -- (264.67,276.69) ;
\draw[cutedge]   (247.68,316.33) -- (264.68,316.69) ;
\draw[cutedge]   (247.68,336.33) -- (264.68,336.69) ;
\draw[cutedge]   (246.68,356.33) -- (263.68,356.69) ;

\draw[cutedge]   (264.69,356.69) -- (265,374.37) ;
\draw[cutedge]   (284.69,356.68) -- (285,374.36) ;
\draw[cutedge]   (304.69,356.68) -- (305,374.36) ;
\draw[cutedge]   (344.69,356.67) -- (345,374.35) ;
\draw[cutedge]   (364.69,356.67) -- (365,374.35) ;

%Shape: Grid [id:dp8681643574444563] 
\draw  [draw opacity=0] (364.67,138.67) -- (364.69,254.35) -- (247,254.37) -- (246.98,138.69) -- cycle ;
\draw  [color={rgb, 255:red, 200; green, 200; blue, 200 }  ,draw opacity=1 ]
 (364.67,138.67) -- (246.98,138.69)
 (364.67,158.67) -- (246.98,158.69)
 (364.68,178.67) -- (246.99,178.69)
 (364.68,198.67) -- (246.99,198.69)
 (364.68,218.67) -- (246.99,218.69)
 (364.69,238.67) -- (247,238.69) ;
\draw  [color={rgb, 255:red, 200; green, 200; blue, 200 }  ,draw opacity=1 ]
 (364.67,138.67) -- (364.69,254.35)
 (344.67,138.67) -- (344.69,254.35)
 (324.67,138.68) -- (324.69,254.35)
 (304.67,138.68) -- (304.69,254.36)
 (284.67,138.68) -- (284.69,254.36)
 (264.67,138.69) -- (264.69,254.36) ;

\draw  [color=white,draw opacity=1,fill=white] (239.18,167.68) -- (394.18,167.68) -- (394.18,189.68) -- (239.18,189.68) -- cycle ;
\draw  [color=white,draw opacity=1,fill=white] (334.99,121.08) -- (336.37,276.08) -- (314.37,276.27) -- (312.99,121.28) -- cycle ;

\draw[cutedge]   (247.67,138.33) -- (264.67,138.69) ;
\draw[cutedge]   (247.67,158.33) -- (264.67,158.69) ;
\draw[cutedge]   (247.68,198.33) -- (264.68,198.69) ;
\draw[cutedge]   (247.68,218.33) -- (264.68,218.69) ;
\draw[cutedge]   (246.68,238.33) -- (263.68,238.69) ;

\draw[cutedge]   (264.69,238.69) -- (265,256.37) ;
\draw[cutedge]   (284.69,238.68) -- (285,256.36) ;
\draw[cutedge]   (304.69,238.68) -- (305,256.36) ;
\draw[cutedge]   (344.69,238.67) -- (345,256.35) ;
\draw[cutedge]   (364.69,238.67) -- (365,256.35) ;

% ===== moved bold black community rectangles (only this changed) =====
\draw  [line width=2.25]  (139,250) -- (255,250) -- (255,365.87) -- (139,365.87) -- cycle ;
\draw  [line width=2.25]  (139,131.87) -- (255,131.87) -- (255,250) -- (139,250) -- cycle ;
\draw  [line width=2.25]  (255,131.87) -- (374,131.87) -- (374,250) -- (255,250) -- cycle ;
\draw  [line width=2.25]  (255,250) -- (374,250) -- (374,365.87) -- (255,365.87) -- cycle ;

% remaining red cuts on top/left etc.
\draw[cutedge]   (146.69,122.69) -- (147,140.37) ;
\draw[cutedge]   (166.69,122.68) -- (167,140.36) ;
\draw[cutedge]   (186.69,122.68) -- (187,140.36) ;
\draw[cutedge]   (226.69,122.67) -- (227,140.35) ;
\draw[cutedge]   (246.69,122.67) -- (247,140.35) ;
\draw[cutedge]   (264.69,122.69) -- (265,140.37) ;
\draw[cutedge]   (284.69,122.68) -- (285,140.36) ;
\draw[cutedge]   (304.69,122.68) -- (305,140.36) ;
\draw[cutedge]   (344.69,122.67) -- (345,140.35) ;
\draw[cutedge]   (364.69,122.67) -- (365,140.35) ;

\draw[cutedge]   (129.67,256.33) -- (146.67,256.69) ;
\draw[cutedge]   (129.67,276.33) -- (146.67,276.69) ;
\draw[cutedge]   (129.68,316.33) -- (146.68,316.69) ;
\draw[cutedge]   (129.68,336.33) -- (146.68,336.69) ;
\draw[cutedge]   (128.68,356.33) -- (145.68,356.69) ;
\draw[cutedge]   (129.67,139.33) -- (146.67,139.69) ;
\draw[cutedge]   (129.67,159.33) -- (146.67,159.69) ;
\draw[cutedge]   (129.68,199.33) -- (146.68,199.69) ;
\draw[cutedge]   (129.68,219.33) -- (146.68,219.69) ;
\draw[cutedge]   (128.68,239.33) -- (145.68,239.69) ;

% Text nodes (unchanged)
\draw (107,360) node [anchor=north west] {(1,1)};
%\draw (205,252) node [anchor=north west] {(r/2,r/2)};

\draw (370,115) node [anchor=north west] {($2r+2$,$2r+2$)};
\draw (370,360) node [anchor=north west] {($2r+2$,1)};

\draw (78,115) node [anchor=north west] {(1,$2r+2$)};
\draw (208,297) node {$C_{1}$};
\draw (208,180) node {$C_{2}$};
\draw (327,297) node {$C_{3}$};
\draw (327,180) node {$C_{4}$};

\end{tikzpicture}
\captionsetup{width=0.9\textwidth}
        \caption{$\mathbb{Z}^2$, amenable. Communities are $(r+1)\times(r+1)$ squares. Miscoordination occurs only along the red edges, which, as on the line, form a small fraction of the edges.}

        \label{fig:Z2-2d}
\end{figure}
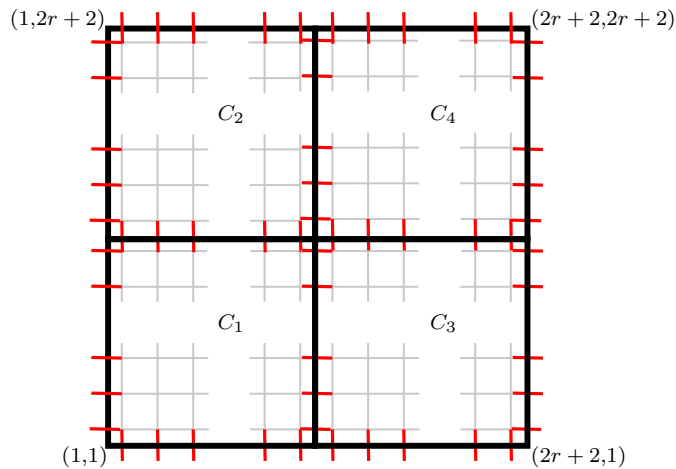

Both of these constructions rely on a geometric property of these graphs: namely, that they can be partitioned into communities in such a way that each community has a small boundary as compared to the number of agents it contains. This is not possible on every graph. For example, consider the (infinite) binary tree (see Figure~\ref{fig:tree}). Here, if we choose as a community, say, the first ten levels, then more than half of the agents will be on the boundary of the community, and so inefficiency will be high. This holds if we choose the community to consist of any number of levels. In fact, this holds for any finite community on this graph: the number of edges between members of the community and members outside the community will be at least the size of the community. Thus, on a finite graph that locally looks like this tree, there cannot be efficient leader equilibria, regardless of how large $r$ is.\footnote{An example of a finite graph that locally looks like a tree is an $(n,d/n)$-Erd\H{o}s-R\'enyi graph for $d>0$, which, for large $n$, with high probability looks like a tree within an $r$-neighborhood of almost all vertices. More formally, if $G_n=(N,E)$ is chosen from this measure over graphs of $n$ vertices, and if $i$ is a uniformly chosen vertex, then the probability that $i$ is in a cycle of length at most $r$ tends to zero as $n$ tends to infinity. The graph will locally look like a Galton-Watson tree, rather than a binary tree.}

\begin{figure}[h]

    \centering 
\tikzset{every picture/.style={line width=0.75pt}} %set default line width to 0.75pt        

\begin{tikzpicture}[x=0.75pt,y=0.75pt,yscale=-1,xscale=1.2]
%uncomment if require: \path (0,444); %set diagram left start at 0, and has height of 444

%Straight Lines [id:da8469959538362166] 
\draw    (449.61,141.66) -- (528.09,162.47) ;
%Straight Lines [id:da9403340839339984] 
\draw    (449.61,141.66) -- (375.31,162.96) ;
%Straight Lines [id:da5700381256391666] 
\draw    (528.09,162.47) -- (563.32,184.92) ;
%Straight Lines [id:da4794302708509276] 
\draw    (528.09,162.47) -- (491.35,185.15) ;
%Straight Lines [id:da054063105715418946] 
\draw    (563.32,184.92) -- (587.75,207.4) ;
%Straight Lines [id:da11786685185256884] 
\draw    (563.31,184.17) -- (542.78,208.3) ;
%Straight Lines [id:da236884649825997] 
\draw    (598.7,239.19) -- (598.9,247.46) ;
%Straight Lines [id:da9319507058001404] 
\draw [color={rgb, 255:red, 255; green, 0; blue, 0 }  ,draw opacity=1 ]   (598.9,247.46) -- (602.71,269.26) ;
%Straight Lines [id:da7775911456150004] 
\draw [color={rgb, 255:red, 255; green, 0; blue, 31 }  ,draw opacity=1 ]   (598.9,247.46) -- (595.37,269.28) ;
%Straight Lines [id:da023081969669879188] 
\draw    (491.35,185.9) -- (515.79,208.39) ;
%Straight Lines [id:da10270080156381967] 
\draw    (491.35,185.15) -- (470.81,209.28) ;
%Straight Lines [id:da7553976647044236] 
\draw    (588.66,208.15) -- (594.1,214.9) ;
%Straight Lines [id:da2244460289188167] 
\draw    (588.66,208.15) -- (583.3,214.94) ;
%Straight Lines [id:da10359947911051015] 
\draw    (543.68,209.05) -- (549.12,215.8) ;
%Straight Lines [id:da1820196417106642] 
\draw    (543.68,209.05) -- (538.33,215.83) ;
%Straight Lines [id:da9960702469536868] 
\draw    (471.7,208.53) -- (477.15,215.28) ;
%Straight Lines [id:da9659892821725642] 
\draw    (471.7,208.53) -- (466.35,215.31) ;
%Straight Lines [id:da23540075568292762] 
\draw    (514.89,209.14) -- (520.33,215.89) ;
%Straight Lines [id:da15355132268640737] 
\draw    (514.89,209.14) -- (509.54,215.93) ;
%Straight Lines [id:da6226086063373447] 
\draw    (584.01,239.24) -- (584.21,247.51) ;
%Straight Lines [id:da44244406490830956] 
\draw [color={rgb, 255:red, 255; green, 0; blue, 0 }  ,draw opacity=1 ]   (584.21,247.51) -- (588.02,269.31) ;
%Straight Lines [id:da011911502652895756] 
\draw [color={rgb, 255:red, 255; green, 0; blue, 31 }  ,draw opacity=1 ]   (584.21,247.51) -- (580.68,269.33) ;
%Straight Lines [id:da8361229551668757] 
\draw    (569.32,239.28) -- (569.52,247.55) ;
%Straight Lines [id:da09916761141587072] 
\draw [color={rgb, 255:red, 255; green, 0; blue, 0 }  ,draw opacity=1 ]   (569.52,247.55) -- (573.33,269.35) ;
%Straight Lines [id:da7598260705940069] 
\draw [color={rgb, 255:red, 255; green, 0; blue, 31 }  ,draw opacity=1 ]   (569.52,247.55) -- (565.99,269.38) ;
%Straight Lines [id:da8010060893978703] 
\draw    (554.63,239.33) -- (554.83,247.6) ;
%Straight Lines [id:da7938534937360819] 
\draw [color={rgb, 255:red, 255; green, 0; blue, 0 }  ,draw opacity=1 ]   (554.83,247.6) -- (558.64,269.4) ;
%Straight Lines [id:da666805586087334] 
\draw [color={rgb, 255:red, 255; green, 0; blue, 31 }  ,draw opacity=1 ]   (554.83,247.6) -- (551.3,269.42) ;
%Straight Lines [id:da7566556448618582] 
\draw    (375.31,162.96) -- (410.54,185.41) ;
%Straight Lines [id:da47360053204684127] 
\draw    (375.31,162.96) -- (338.57,185.64) ;
%Straight Lines [id:da24100333695960818] 
\draw    (410.54,185.41) -- (434.97,207.89) ;
%Straight Lines [id:da9390014060441009] 
\draw    (410.53,184.66) -- (390,208.79) ;
%Straight Lines [id:da08220980728916782] 
\draw    (338.57,186.39) -- (363.01,208.88) ;
%Straight Lines [id:da4600046354671099] 
\draw    (338.57,185.64) -- (318.03,209.77) ;
%Straight Lines [id:da7768108426031236] 
\draw    (435.88,208.64) -- (441.32,215.39) ;
%Straight Lines [id:da47209105443723065] 
\draw    (435.88,208.64) -- (430.52,215.43) ;
%Straight Lines [id:da24054294991129233] 
\draw    (390.9,209.54) -- (396.34,216.29) ;
%Straight Lines [id:da6109927711748606] 
\draw    (390.9,209.54) -- (385.55,216.32) ;
%Straight Lines [id:da7577417343949312] 
\draw    (318.92,209.02) -- (324.37,215.77) ;
%Straight Lines [id:da5931571471889051] 
\draw    (318.92,209.02) -- (313.57,215.8) ;
%Straight Lines [id:da43092420345668403] 
\draw    (362.11,209.63) -- (367.55,216.38) ;
%Straight Lines [id:da863772766970921] 
\draw    (362.11,209.63) -- (356.76,216.42) ;
%Straight Lines [id:da4711006536450655] 
\draw    (509.52,240.23) -- (509.72,248.5) ;
%Straight Lines [id:da33516338447861815] 
\draw [color={rgb, 255:red, 255; green, 0; blue, 0 }  ,draw opacity=1 ]   (509.72,248.5) -- (513.53,270.3) ;
%Straight Lines [id:da0873897350546986] 
\draw [color={rgb, 255:red, 255; green, 0; blue, 31 }  ,draw opacity=1 ]   (509.72,248.5) -- (506.18,270.32) ;
%Straight Lines [id:da8434229588471632] 
\draw    (494.83,240.27) -- (495.03,248.54) ;
%Straight Lines [id:da6813016458692503] 
\draw [color={rgb, 255:red, 255; green, 0; blue, 0 }  ,draw opacity=1 ]   (495.03,248.54) -- (498.84,270.34) ;
%Straight Lines [id:da09167998236662633] 
\draw [color={rgb, 255:red, 255; green, 0; blue, 31 }  ,draw opacity=1 ]   (495.03,248.54) -- (491.49,270.37) ;
%Straight Lines [id:da16492922006193322] 
\draw    (480.14,240.32) -- (480.34,248.59) ;
%Straight Lines [id:da36852578962715354] 
\draw [color={rgb, 255:red, 255; green, 0; blue, 0 }  ,draw opacity=1 ]   (480.34,248.59) -- (484.15,270.39) ;
%Straight Lines [id:da8050152931722941] 
\draw [color={rgb, 255:red, 255; green, 0; blue, 31 }  ,draw opacity=1 ]   (480.34,248.59) -- (476.8,270.41) ;
%Straight Lines [id:da39619886371620994] 
\draw    (465.45,240.37) -- (465.65,248.64) ;
%Straight Lines [id:da684203507552533] 
\draw [color={rgb, 255:red, 255; green, 0; blue, 0 }  ,draw opacity=1 ]   (465.65,248.64) -- (469.46,270.44) ;
%Straight Lines [id:da9100647134198045] 
\draw [color={rgb, 255:red, 255; green, 0; blue, 31 }  ,draw opacity=1 ]   (465.65,248.64) -- (462.11,270.46) ;
%Straight Lines [id:da10385636816470412] 
\draw    (445.51,240.43) -- (445.71,248.7) ;
%Straight Lines [id:da2942729183059075] 
\draw [color={rgb, 255:red, 255; green, 0; blue, 0 }  ,draw opacity=1 ]   (445.71,248.7) -- (449.52,270.5) ;
%Straight Lines [id:da9999321318328588] 
\draw [color={rgb, 255:red, 255; green, 0; blue, 31 }  ,draw opacity=1 ]   (445.71,248.7) -- (442.18,270.52) ;
%Straight Lines [id:da9184756723981866] 
\draw    (430.82,240.48) -- (431.02,248.75) ;
%Straight Lines [id:da579313934124604] 
\draw [color={rgb, 255:red, 255; green, 0; blue, 0 }  ,draw opacity=1 ]   (431.02,248.75) -- (434.83,270.55) ;
%Straight Lines [id:da7670493997606325] 
\draw [color={rgb, 255:red, 255; green, 0; blue, 31 }  ,draw opacity=1 ]   (431.02,248.75) -- (427.49,270.57) ;
%Straight Lines [id:da8358652023232002] 
\draw    (416.13,240.52) -- (416.33,248.8) ;
%Straight Lines [id:da10903454307051097] 
\draw [color={rgb, 255:red, 255; green, 0; blue, 0 }  ,draw opacity=1 ]   (416.33,248.8) -- (420.14,270.59) ;
%Straight Lines [id:da2401636780301285] 
\draw [color={rgb, 255:red, 255; green, 0; blue, 31 }  ,draw opacity=1 ]   (416.33,248.8) -- (412.8,270.62) ;
%Straight Lines [id:da5775386939265975] 
\draw    (401.44,240.57) -- (401.64,248.84) ;
%Straight Lines [id:da3101569273365514] 
\draw [color={rgb, 255:red, 255; green, 0; blue, 0 }  ,draw opacity=1 ]   (401.64,248.84) -- (405.45,270.64) ;
%Straight Lines [id:da8764227912228265] 
\draw [color={rgb, 255:red, 255; green, 0; blue, 31 }  ,draw opacity=1 ]   (401.64,248.84) -- (398.11,270.67) ;
%Straight Lines [id:da9332868167275209] 
\draw    (356.32,241.47) -- (356.52,249.74) ;
%Straight Lines [id:da7420698047182198] 
\draw [color={rgb, 255:red, 255; green, 0; blue, 0 }  ,draw opacity=1 ]   (356.52,249.74) -- (360.34,271.54) ;
%Straight Lines [id:da6442215405262578] 
\draw [color={rgb, 255:red, 255; green, 0; blue, 31 }  ,draw opacity=1 ]   (356.52,249.74) -- (352.99,271.56) ;
%Straight Lines [id:da6817003151229476] 
\draw    (341.63,241.52) -- (341.83,249.79) ;
%Straight Lines [id:da8033358789092806] 
\draw [color={rgb, 255:red, 255; green, 0; blue, 0 }  ,draw opacity=1 ]   (341.83,249.79) -- (345.65,271.59) ;
%Straight Lines [id:da8827078951828061] 
\draw [color={rgb, 255:red, 255; green, 0; blue, 31 }  ,draw opacity=1 ]   (341.83,249.79) -- (338.3,271.61) ;
%Straight Lines [id:da7116122964987543] 
\draw    (326.94,241.56) -- (327.14,249.83) ;
%Straight Lines [id:da0021511723058249554] 
\draw [color={rgb, 255:red, 255; green, 0; blue, 0 }  ,draw opacity=1 ]   (327.14,249.83) -- (330.96,271.63) ;
%Straight Lines [id:da27686082583523697] 
\draw [color={rgb, 255:red, 255; green, 0; blue, 31 }  ,draw opacity=1 ]   (327.14,249.83) -- (323.61,271.66) ;
%Straight Lines [id:da9253595669449427] 
\draw    (312.25,241.61) -- (312.45,249.88) ;
%Straight Lines [id:da13320380827272948] 
\draw [color={rgb, 255:red, 255; green, 0; blue, 0 }  ,draw opacity=1 ]   (312.45,249.88) -- (316.27,271.68) ;
%Straight Lines [id:da7267350503076517] 
\draw [color={rgb, 255:red, 255; green, 0; blue, 31 }  ,draw opacity=1 ]   (312.45,249.88) -- (308.92,271.7) ;

% =========================
% Dots aligned with nodes
% =========================

% Main tree nodes (root + 3 levels)
\foreach \x/\y in {
  449.61/141.66,
  528.09/162.47,
  375.31/162.96,
  563.32/184.92,
  491.35/185.15,
  410.54/185.41,
  338.57/185.64,
  587.75/207.40,
  542.78/208.30,
  515.79/208.39,
  470.81/209.28,
  434.97/207.89,
  390.00/208.79,
  363.01/208.88,
  318.03/209.77
}{
  \draw (\x,\y) node[font=\Huge] {$.$};
}

% Branch points before red splits (at y ≈ 247–250)
\foreach \x/\y in {
  598.90/247.46,
  584.21/247.51,
  569.52/247.55,
  554.83/247.60,
  509.72/248.50,
  495.03/248.54,
  480.34/248.59,
  465.65/248.64,
  445.71/248.70,
  431.02/248.75,
  416.33/248.80,
  401.64/248.84,
  356.52/249.74,
  341.83/249.79,
  327.14/249.83,
  312.45/249.88
}{
  \draw (\x,\y) node[font=\Huge] {$.$};
}

% Red leaf endpoints (both children of each branch)
\foreach \x/\y in {
  602.71/269.26,
  595.37/269.28,
  588.02/269.31,
  580.68/269.33,
  573.33/269.35,
  565.99/269.38,
  558.64/269.40,
  551.30/269.42,
  513.53/270.30,
  506.18/270.32,
  498.84/270.34,
  491.49/270.37,
  484.15/270.39,
  476.80/270.41,
  469.46/270.44,
  462.11/270.46,
  449.52/270.50,
  442.18/270.52,
  434.83/270.55,
  427.49/270.57,
  420.14/270.59,
  412.80/270.62,
  405.45/270.64,
  398.11/270.67,
  360.34/271.54,
  352.99/271.56,
  345.65/271.59,
  338.30/271.61,
  330.96/271.63,
  323.61/271.66,
  316.27/271.68,
  308.92/271.70
}{
  \draw (\x,\y) node[font=\Huge,color={rgb,255:red,255;green,0;blue,0}] {$.$};
}

% Ellipsis markers (kept where you had them)
\draw (536.82,258.90) node[anchor=north west,inner sep=0.75pt,font=\scriptsize,rotate=-179.65] {$...$};
\draw (383.63,258.90) node[anchor=north west,inner sep=0.75pt,font=\scriptsize,rotate=-179.65] {$...$};

\draw (570.82,220.90) node[anchor=north west,inner sep=0.75pt,font=\scriptsize,rotate=-179.65, rotate=90] {$...$};
\draw (495.63,220.90) node[anchor=north west,inner sep=0.75pt,font=\scriptsize,rotate=-179.65, rotate=90] {$...$};

\draw (342.82,220.90) node[anchor=north west,inner sep=0.75pt,font=\scriptsize,rotate=-179.65, rotate=90] {$...$};
\draw (417.63,220.90) node[anchor=north west,inner sep=0.75pt,font=\scriptsize,rotate=-179.65, rotate=90] {$...$};

\draw (304.63,245.14) node[anchor=north west,inner sep=0.75pt,font=\scriptsize] {$1$};

\draw (317.63,245.14) node[anchor=north west,inner sep=0.75pt,font=\scriptsize] {$2$};

\draw (332.0,245.14) node[anchor=north west,inner sep=0.75pt,font=\scriptsize] {$3$};

\draw (602.0,243.14) node[anchor=north west,inner sep=0.75pt,font=\scriptsize] {$2^r$};

\end{tikzpicture}

        \caption{Binary tree, non-amenable.}
        \label{fig:tree}
    
\end{figure}
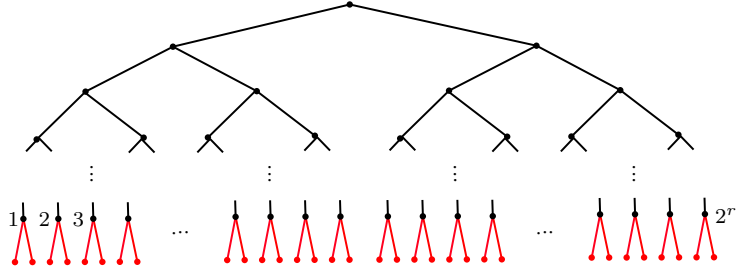

In infinite graphs, the existence of finite communities with small boundaries is captured by the definition of an \emph{amenable graph}. Formally, let $G=(N,E)$ be a graph, and assume that $N$ is countable and each neighborhood $N_i$ is finite. We write $\finset(N)$ for the family of finite subsets of $N$. For $F\in\finset(N)$ define its \emph{boundary}
\[
\partial F \;=\; \{\, (i,j) \in E\,:\, i\in F,\; j\notin F \,\},
\]
i.e., the edges that cross from $F$ to outside of $F$. The \textit{surface-to-volume ratio} of $F$ is $\frac{|\partial F|}{|F|}$. The graph $G$ is said to be \emph{amenable} if it has finite sets with an arbitrarily small surface-to-volume ratio:
\begin{align*}
    \inf_{F \in \finset(N)}\frac{|\partial F|}{|F|} = 0.
\end{align*}
The infinite line and infinite grid are amenable, whereas the infinite binary tree is not. 

In our model graphs are finite, and so we will need a finer definition, as all finite graphs are amenable.\footnote{Let $F = V$. Then $\partial F$ is the empty set, and so $|\partial F|/|F|=0$.} Rather than just requiring the existence of communities with low surface-to-volume ratio, this definition will require a partition of the graph into such sets, so that almost all agents must be members of such communities. We will also need these sets to have radius at most $r$, so that agents can coordinate actions within their communities; we say that $C \subseteq N$ has radius at most $r$ if there is an agent $i \in C$ such that $C$ is a subset of $B_r(i)$.

\begin{definition}
     A finite graph $G=(N,E)$ is $(\eps,r)$-amenable if there exists a partition of $N$ into connected parts of radius at most $r$, and the number of edges connecting nodes in different parts is at most $\eps|E|$.
\end{definition}
Equivalently, there is a set of edges $D \subseteq E$ such that $|D| \leq \eps|E|$, and for each $i \in N$ the connected component of $i$ in $G'=(N,E \setminus D)$ has radius at most $r$, where the radius is measured in the
original graph $G$ (rather than in $G'$). That is, one can remove a small number of connections so that the remaining graph is separated into small communities. Note that while some of these communities may have a large surface-to-volume ratio, it is impossible for these communities to contain many agents, since there are at most $\eps|E|$ edges in the union of all boundaries. Indeed, the average surface-to-volume ratio (weighted by population size) is precisely $|D|/|N|$, which is at most $\eps|E|/|N|$. Assuming each agent is connected to a small number of others, this means that when $\eps$ is small, almost all agents will be in connected components that have small surface-to-volume ratios.\footnote{Formally, if $G$ is $(\eps,r)$-amenable, then at most a $\sqrt{2\eps \dmax}$-fraction of the agents will be in connected components with surface-to-volume ratio greater than $\sqrt{2\eps \dmax}$.} In the math literature, this property is often called \emph{hyperfiniteness} rather than amenability. A perhaps more appropriate name is \emph{statistical amenability}, as it requires that almost all nodes be the members of a set with a small surface-to-volume ratio, rather than just the existence of such sets, as in the original definition of amenability for infinite graphs.

It is easy to see that for any $\eps>0$, every cycle graph will be $(\eps,r)$-amenable, assuming $r$ is large enough. The same holds for an $n$-by-$n$ two dimensional grid. An interesting class of finite graphs which are amenable for non-obvious reasons are the bounded degree finite planar graphs\footnote{A graph is \emph{planar} if it can be drawn on the plane without any of its edges intersecting.}, which are amenable by a theorem of \cite{lipton1977applications,lipton1979separator} in the following sense: for every maximal degree $\dmax$ and $\eps>0$ there is an $r$ such that every planar graph is $(\eps,r)$-amenable. This holds more generally for bounded degree graphs with any excluded minor \citep*{alon1990separator}. An example of a non-amenable graph is an Erd\H{o}s-R\'enyi graph, in which each pair of agents are connected by an edge at random, independently and with some probability $p$. In particular, fixing the expected degree $d = n p > 1$,  there is an $\eps>0$ such that for any $r$ the probability that the graph is $(\eps,r)$-amenable tends to zero as the number of agents $n$ tends to infinity. These graphs locally look like trees, but globally have (almost) no leaves, which is impossible for a planar graph.

\section{Coordination on amenable graphs}
\label{sec:coordination_private}

Our first main result shows that when a graph is $(\varepsilon,r)$-amenable, we can find an equilibrium in which agents coordinate well. These equilibria will be \emph{leader equilibria}, which we informally described above.

Fix $G = (N,E)$ and a partition of $N$ into communities $\{C_1,C_2,\ldots\}$ of radius at most $r$. In each community $C_k$ we choose a leader $\ell_k$ such that $C_k$ is contained in $B_r(\ell_k)$; the existence of $\ell_k$ follows immediately from the fact that $C_k$ has radius at most $r$.

In a leader equilibrium, each community coordinates through its leader's messages. Each leader $\ell_k$ of $C_k$ sends an empty message $\pri_{\ell_k,j}$ to each $j \in B_r(\ell_k)$ who is not in their community. To the members of their community, leaders send the same message, chosen uniformly at random from $\{-1,+1\}$. I.e., if $i,j \in C_k$, then $\pri_{\ell_k,i}=\pri_{\ell_k,j} \in \{-1,+1\}$. Non-leaders send empty messages. Finally, upon observing  $\pri_{\ell_k,i}$, each agent $i \in C_k$ chooses $a_i=\pri_{\ell_k,i}$.  

Off path, any message that does not come from one's leader  or is not in $\{-1,+1\}$ is ignored. If an agent does not receive a message in $\{-1,+1\}$ from their leader, they  choose an action uniformly at random.\footnote{We do not address subgame perfection in this paper. Nevertheless, we believe that, under appropriate definitions, these equilibria can be made sequential.}

This strategy profile is action-symmetric. It is straightforward to verify that it is an equilibrium. First, a leader has no incentive to deviate: since they are a member of their own community, adhering to the equilibrium guarantees coordination with all neighbors inside that community. Even for a leader with neighbors in another community, sending non-empty messages to neighbors outside the community cannot improve payoffs because such messages are ignored by those neighbors under the prescribed behavior. Moreover, since leaders do not observe the messages other leaders sent to their communities, they cannot hope to coordinate with neighbors from outside their community with probability greater than one half, which they already achieve in this equilibrium. The same holds for non-leaders: following the leader guarantees coordinating with members of their community, while coordinating with members of other communities is impossible to achieve with probability more than one half, since leaders only send messages to their own communities. Finally, non-leaders have no incentive to send non-empty messages, since these are ignored.

Note that implementing a leader equilibrium does not require that the communities $C_k$ be connected. Nevertheless, since our communication model could be a reduced form of a more detailed process in which only direct neighbors communicate, it is nice to have each $C_k$ connected. As we note above, the definition of $(\eps,r)$-amenability is equivalent to one in which we do not require the communities to be connected.

We use leader equilibria to show our first main result.
\begin{theorem}
    \label{thm:amenable}
    Suppose $G$ is $(\eps,r)$-amenable. Then there exists an action-symmetric leader equilibrium with radius of communication $r$ and expected inefficiency at most $\eps$.
\end{theorem}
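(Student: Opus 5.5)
We take the partition witnessing $(\eps,r)$-amenability, say $\{C_1,C_2,\ldots\}$ with each $C_k$ connected, of radius at most $r$, and with the set $D$ of edges joining different parts satisfying $|D|\le \eps|E|$. For each $k$ we fix $\ell_k\in C_k$ with $C_k\subseteq B_r(\ell_k)$, and we play the leader profile described above. The proof has three parts: action-symmetry, the inefficiency bound, and the equilibrium property. Only the last needs any care.

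\emph{Action-symmetry.} Under $\iota$, each leader's message in $\{-1,+1\}$ is negated. Because it is uniform on $\{-1,+1\}$, its law is unchanged. Empty messages are fixed by negation. Each rule $o_i$ (copy the leader's message in $\{-1,+1\}$, otherwise randomize uniformly) satisfies $\iota(o_i)=o_i$ in distribution. Hence every $\sigma_i$ is $\iota$-invariant.

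\emph{Inefficiency.} For an edge $(i,j)$ with $i,j$ in the same $C_k$, both agents copy the same message, so $a_i=a_j$ surely. For an edge $(i,j)\in D$ with $i\in C_k$, $j\in C_{k'}$ and $k\ne k'$, the actions $a_i$ and $a_j$ equal independent uniform signs drawn by distinct leaders, so $\Pr{a_i\neq a_j}=1/2$. Therefore
\[
\E{I}=\frac{2}{|E|}\sum_{(i,j)\in D}\tfrac12=\frac{|D|}{|E|}\le\eps .
\]
Here $D$ is counted as a set of ordered pairs, consistently with $E$.

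\emph{Equilibrium.} This is the main step. We fix agent $i\in C_k$ and an arbitrary deviation, and show that its expected payoff is no higher than under the prescribed strategy. The key observation is that, under the others' strategies, the actions $a_j$ of neighbors $j\notin C_k$ are determined by their own leaders' signs $s_{k'}$. These signs are uniform and independent of everything $i$ observes, and of anything $i$ sends.
\begin{itemize}
\item Other agents ignore every message except their own leader's. If $i$ is not a leader, its messages therefore affect nothing.
\item If $i=\ell_k$, its messages to outsiders are ignored. Its messages to its own community determine the actions of those members.
\item The information $i$ receives consists of its own leader's sign (if $i$ is not the leader) and empty messages.
\end{itemize}
Hence, conditional on $i$'s information and any action $i$ takes, each neighbor $j\notin C_k$ matches with probability exactly $1/2$. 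For neighbors in $C_k$: a non-leader $i$ knows their action is $s_k$, so copying $s_k$ is optimal. A leader $\ell_k$ controls what its community members play, so the best it can do is make all its in-community neighbors match its own action, which the prescribed strategy already achieves. So each agent's equilibrium payoff equals the pointwise upper bound available to any deviation, namely $-2\,|\{j\in N_i\setminus C_k\}|\cdot\tfrac12\cdot$ (matching contribution) with zero loss inside $C_k$.

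The only delicate point is this conditional-independence argument for cross-community neighbors when $i$ is a leader who deviates in its messages. It works because other leaders' signs are drawn independently, and their communities never condition on messages from $\ell_k$.
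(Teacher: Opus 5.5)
Your proof is correct and follows the paper's argument. You take the partition witnessing $(\eps,r)$-amenability, play the associated leader profile, and observe that only edges of $D$ can mismatch, each with probability $1/2$, which gives expected inefficiency $|D|/|E|\le\eps$; your explicit checks of action-symmetry and of the equilibrium property (other leaders' signs are independent of anything the deviator sees or sends) spell out what the paper verifies informally in the discussion just before the theorem.
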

Thus, if we think of $\eps$ as very small, we have that on graphs that are very amenable, agents can coordinate very well. Note, that the converse of the theorem statement also holds: if $G$ admits a leader equilibrium with inefficiency at most $\eps$, the set of edges $D$ that connects members of distinct communities witnesses the $(\eps,r)$-amenability of $G$. Of course, this result does not preclude the existence of equilibria that are more efficient than leader equilibria.
\begin{proof}[Proof of Theorem~\ref{thm:amenable}]
    Suppose $G=(N,E)$ is $(\eps,r)$-amenable. Then there is a set of edges $D \subseteq E$ such that $|D| \leq \eps|E|$ and each connected component of $(N,E\setminus D)$ is of radius at most $r$. Denote by $\{C_1,C_2,\ldots\}$ the connected components of the graph, which form a partition of $N$. 
    
    Consider a leader equilibrium associated with this partition. In this equilibrium, members of the same community coordinate perfectly, and members of different communities coordinate with probability one half. Since neighboring members of different communities must be connected by an edge in $D$, the total expected inefficiency is at most $|D|$, and thus the average expected inefficiency is at most $|D|/|E| \leq \eps$.

\end{proof}

\section{Impossibility of coordination on non-amenable graphs}

Theorem~\ref{thm:amenable} shows that efficient coordination is possible on an amenable graph. We now turn to the other direction and show that non-amenable graphs do not admit efficient coordination. On such graphs every finite set has a large surface-to-volume ratio. As a result, any attempt to partition the graph into communities necessarily creates many agents on the boundary, who face a high probability of mismatch with their neighbors. This precludes the existence of a leader equilibrium with low inefficiency. Our next result shows that on a non-amenable graph no equilibrium of any kind can achieve low inefficiency. In fact, even if we ignore incentives and consider general strategy profiles, the resulting actions will still generate a large amount of mismatch. Thus, amenability is not only a sufficient, but also a necessary condition for the existence of efficient coordination. The next theorem makes this precise. 

\begin{theorem}
\label{thm:non-amenable}
    Suppose $G$ admits an action-symmetric strategy profile with a radius of communication $r$ and average expected inefficiency $\varepsilon$. Then $G$ is $(\sqrt{8\varepsilon}, r)$-amenable. 
\end{theorem}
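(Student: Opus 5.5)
The plan is to follow the route sketched in the introduction: from the given strategy profile, build a random partition of $N$ into communities of radius at most $r$ by letting every agent choose a ``leader'' in $B_r(i)$. Each agent draws its leader from its Shapley influence distribution, and the draws are coupled so that neighbours usually agree. We then show that the expected fraction of edges joining agents with different leaders is at most $\sqrt{8\eps}$, so some realization of the partition witnesses $(\sqrt{8\eps},r)$-amenability. Connectedness of the parts is not an issue, since, as noted after the definition, $(\eps,r)$-amenability is equivalent to the version in which parts need not be connected.

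\emph{Step 1 (influence distributions).} Write $X_j$ for the (random, independent across $j$) strategy of agent $j$. Then $a_i = f_i\bigl((X_j)_{j\in B_r(i)}\bigr)$ is a function of independent inputs. By action-symmetry $\E{a_i}=0$, and since $a_i=\pm1$ we have $\E{a_i^2}=1$. Take the Hoeffding (ANOVA) decomposition $a_i=\sum_{T} a_{i,T}$ over subsets $T\subseteq N$ of the inputs, with orthogonal components. Here $a_{i,T}=0$ unless $T\subseteq B_r(i)$, and $a_{i,\emptyset}=0$. The cooperative game $v_i(S)=\Var(\E{a_i}{X_S})=\sum_{T\subseteq S}\|a_{i,T}\|^2$ then has Shapley values
\[
\phi_i(j)=\sum_{T\ni j}\frac{\|a_{i,T}\|^2}{|T|}\ \ge 0 ,
\]
which sum to $v_i(N)=1$. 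So $\phi_i$ is a probability distribution supported on $B_r(i)$.

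\emph{Step 2 (Lipschitz property).} For a neighbouring pair $(i,k)$ let $p_{ik}=\Pr{a_i\ne a_k}$. We bound the $L^1$ distance between the two influence distributions by comparing them component by component:
\[
\sum_j|\phi_i(j)-\phi_k(j)|\le\sum_T\bigl|\|a_{i,T}\|^2-\|a_{k,T}\|^2\bigr|\le\sum_T\|a_{i,T}-a_{k,T}\|\,\|a_{i,T}+a_{k,T}\|.
\]
By Cauchy--Schwarz and orthogonality this is at most $\|a_i-a_k\|\,\|a_i+a_k\|\le 2\sqrt{4p_{ik}}$. Hence $\tv{\phi_i,\phi_k}\le 2\sqrt{p_{ik}}$. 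I expect this step to be the main obstacle. The difference game $v_i-v_k$ is not monotone, so bounding the Shapley values of the difference only through $v_i(N)-v_k(N)$ fails. That is why I work through the Hoeffding decomposition, which makes the Shapley values explicitly nonnegative sums.

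\emph{Step 3 (coupling and counting).} Attach i.i.d.\ $\mathrm{Exp}(1)$ variables $E_j$ to all agents, and let agent $i$ choose the leader $L_i=\argmin_j E_j/\phi_i(j)$. Then $L_i$ has law $\phi_i$ and lies in $B_r(i)$. By the standard min-hash/exponential-race bound, $\Pr{L_i\neq L_k}\le 2\tv{\phi_i,\phi_k}/(1+\tv{\phi_i,\phi_k})\le 2\tv{\phi_i,\phi_k}$, so $\Pr{L_i\ne L_k}\le 4\sqrt{p_{ik}}$. The parts $\{i: L_i=\ell\}$ lie in $B_r(\ell)$, so they have radius at most $r$ in the non-connected sense. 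The average expected inefficiency satisfies $\eps=\frac{2}{|E|}\sum_{(i,k)\in E}p_{ik}$. By Jensen's inequality, the expected fraction of edges between different parts is then
\[
\frac{1}{|E|}\sum_{(i,k)\in E}4\sqrt{p_{ik}}\le 4\sqrt{\tfrac{1}{|E|}\textstyle\sum_{(i,k)\in E} p_{ik}}=4\sqrt{\eps/2}=\sqrt{8\eps}.
\]
Some realization of the $E_j$ attains at most this expectation, and that realization gives the partition witnessing $(\sqrt{8\eps},r)$-amenability.
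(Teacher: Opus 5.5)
Your proposal is correct and follows essentially the same route as the paper, which proves the general Theorem~\ref{thm:non-amenable-general} with the same ingredients and constants: the Shapley influence distribution (your Hoeffding components $a_{i,T}$ are exactly its $\sum_{U\in\mathcal{U}_T}\hat x_U U$, and your Step 2 is its Proposition~\ref{prop:l2l1}), the exponential-race grand coupling, and Jensen's inequality. The only cosmetic difference is at the end: instead of invoking the equivalence with disconnected parts, the paper takes the connected components of $(N,E\setminus D)$ with $D=\{(i,k)\in E : L_i\neq L_k\}$; these refine your level sets $\{i : L_i=\ell\}$ without adding any crossing edges.
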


This result provides a partial converse to Theorem~\ref{thm:amenable}. It is partial since $\sqrt{8\varepsilon}$  is much larger than $\eps$ (for small $\eps$) and so we cannot preclude that for intermediate ranges there are non-leader equilibria that achieve low inefficiency. Nevertheless, these results together qualitatively characterize amenability as a necessary and sufficient condition for low inefficiency. Since a graph is $(\eps,r)$-amenable if and only if it admits a leader equilibrium with inefficiency $\eps$ and a radius of communication $r$, this theorem immediately implies (by a transformation of $\eps$) that if the best leader equilibrium has inefficiency $\eps$, then no equilibrium can achieve inefficiency less than $\eps^2/8$. In other words, this result shows that if there is a strategy profile that achieves low inefficiency, then there is a leader equilibrium that achieves low inefficiency.

As noted above, incentives do not play a role here: this result applies to any strategy profile. We accordingly rephrase it in a more general, probabilistic framework. The advantage will be that this version is easily applied to a large class of related economic models, as we explain in \S\ref{sec:general-app}.
\begin{theorem}
    \label{thm:non-amenable-general}
    Let $G = (N,E)$ be a finite graph, and let $(Z_i)_{i \in N}$ be independent random variables associated to each vertex. Let $(X_i)_{i \in N}$ also be random variables associated to each vertex, with each $X_i$ measurable with respect to the sigma-algebra generated by $(Z_j)_{j \in B_r(i)}$. Suppose that $\E{X_i}=0$ and $\E{X_i^2}=1$ for all $i$. If 
    \begin{align*}
        \frac{1}{|E|}\sum_{(i,j)\in E}\frac{1}{2}\E{(X_i-X_j)^2} \leq \eps,
    \end{align*}
    then $G$ is $(\sqrt{8\eps},r)$-amenable.
\end{theorem}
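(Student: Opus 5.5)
The plan is to build, for each vertex $i$, a probability distribution $p_i$ over $B_r(i)$ measuring how much each input $Z_j$ influences $X_i$. This is the Shapley influence distribution from the introduction. We then show $p_i$ is Lipschitz in $X_i$ in $L^2$, and use a coupling so that neighbours pick the same ``leader'' with high probability. The communities are the sets of agents sharing a leader. Each has radius at most $r$ because every agent's leader lies in $B_r(i)$. The expected number of cut edges is then controlled by the Lipschitz bound.

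\textbf{Step 1 (Shapley influence via Hoeffding decomposition).} Since the $Z_j$ are independent and $X_i\in L^2$, we use the Efron--Stein/Hoeffding decomposition $X_i=\sum_{T\subseteq N}X_i^{T}$. Here $X_i^T$ is measurable with respect to $(Z_j)_{j\in T}$ and the pieces are mutually orthogonal. Also $X_i^T=0$ unless $T\subseteq B_r(i)$, and $X_i^{\emptyset}=\E{X_i}=0$. Consider the cooperative game $v_i(S)=\E{\E{X_i}{(Z_j)_{j\in S}}^2}=\sum_{T\subseteq S}\E{(X_i^T)^2}$. Its Shapley value is
\[
p_i(j)=\sum_{T\ni j}\frac{\E{(X_i^T)^2}}{|T|}.
\]
This is nonnegative, supported on $B_r(i)$, and sums to $\sum_T\E{(X_i^T)^2}=\E{X_i^2}=1$. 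So $p_i$ is a probability distribution.

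\textbf{Step 2 (Lipschitz property).} For neighbours $i,k$, write $a_T=\|X_i^T\|_2$ and $b_T=\|X_k^T\|_2$. Then
\[
\sum_j|p_i(j)-p_k(j)|\le \sum_T\bigl|a_T^2-b_T^2\bigr|\le\sum_T\|X_i^T-X_k^T\|_2\,(a_T+b_T).
\]
The second inequality uses the triangle inequality $|a_T-b_T|\le\|X_i^T-X_k^T\|_2$. By Cauchy--Schwarz and orthogonality this is at most $\|X_i-X_k\|_2(\|X_i\|_2+\|X_k\|_2)=2\|X_i-X_k\|_2$. Hence the total variation distance satisfies $\tv{p_i,p_k}\le \sqrt{\E{(X_i-X_k)^2}}=\sqrt{2\delta_{ik}}$, where $\delta_{ik}=\frac12\E{(X_i-X_k)^2}$.

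\textbf{Step 3 (coupled leader choice and averaging).} Attach i.i.d.\ exponential variables $\xi_j$ to the vertices. Agent $i$ chooses as leader $L_i=\argmin_j \xi_j/p_i(j)$. This is the standard ``exponential race'' coupling: $L_i\sim p_i$, and
\[
\Pr{L_i\ne L_k}\le \frac{2\tv{p_i,p_k}}{1+\tv{p_i,p_k}}\le 2\sqrt{2\delta_{ik}}.
\]
The parts $\{i: L_i=j\}$ lie in $B_r(j)$, so they have radius at most $r$ in the non-connected sense. We then invoke the remark above that $(\eps,r)$-amenability is equivalent with or without the connectivity requirement. Summing over $E$ and using Jensen's inequality ($\frac1{|E|}\sum\sqrt{\delta_e}\le\sqrt{\frac1{|E|}\sum\delta_e}\le\sqrt\eps$), the expected fraction of cut edges is at most $2\sqrt2\sqrt\eps=\sqrt{8\eps}$. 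Some realization of $(\xi_j)$ therefore achieves this, which gives the required partition.

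\textbf{Main obstacle.} The delicate point is Step 2: getting the Lipschitz constant exactly $2$ in $L^1$. This relies on the orthogonal decomposition, so that the Shapley value is an explicit sum of squared norms. The connectivity reduction in Step 3 also needs care, since a connected component of a part need not contain its leader. I would rely on the stated equivalence, and if necessary re-choose a centre inside each component within distance $r$, or else check that the equivalence does not lose the radius bound.
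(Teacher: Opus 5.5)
Your proposal is correct and follows the paper's proof step for step. Both use the Shapley influence distribution with Harsanyi dividends $\|X_i^T\|_2^2$, the same Lipschitz bound $\tv{\mu_X,\mu_Y}\le\sqrt{\E{(X-Y)^2}}$ via Cauchy--Schwarz and Parseval, the exponential-race grand coupling, and Jensen; the only difference is that the paper reaches the dividends through an orthonormal product basis and factors coefficient-wise, where you use the Hoeffding pieces and the reverse triangle inequality, which gives the same constant. On the radius point you flag, the paper does no more than you: it notes only that the component of $i$ in $(N,E\setminus D)$ lies in $B_r(L_i)$, where $L_i$ need not belong to that component, so both arguments really produce parts contained in some $r$-ball. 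Your fallback of re-centering inside each component would not work in general, since such a component is only guaranteed radius $2r$ about its own vertices.
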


Theorem~\ref{thm:non-amenable} is an immediate corollary, as each action $a_i$ (corresponding to $X_i$) is measurable with respect to the sigma-algebra generated by the mixed strategies (corresponding to $Z_j$) of the agents in $B_r(i)$.

To prove Theorem~\ref{thm:non-amenable-general}, we start with random variables $(X_i)_{i \in N}$ satisfying the hypothesis of the theorem. These correspond to an action profile on $G$ with small average expected inefficiency $\varepsilon$.

In the first step, for each agent we construct a probability distribution over the agents in her $r$-neighborhood that measures how much each agent influences her action. Low inefficiency implies that neighboring agents rarely take different actions. We show that this, in turn, forces their influence distributions to be very close to each other in terms of total variation distance.

In the second step, we use these influence distributions to associate to each agent a  leader. Since neighboring agents have similar distributions, the probability that they end up choosing the same leader is high. Agents that select the same leader form a community and choose the same action, so only a small fraction of agents lie on the boundaries between communities. This gives a partition of the graph into small communities with low surface-to-volume ratio, which characterizes amenable graphs. We explain the proof ideas in more detail below.

\subsection{Shapley influence distributions}
Let $Z = (Z_i)_{i \in N}$ be independent (but not necessarily identically distributed) random variables in a standard Borel probability space, indexed by a finite or countable set $N$. For a finite $S \subseteq N$ denote $Z_S = (Z_i)_{i \in S}$.

Let $X$ be a random variable with mean zero and unit variance, measurable with respect to $\sigma(Z)$, i.e., $X = f(Z)$ for some measurable $f$. We would like to calculate an influence distribution: a probability measure $\mu_X$ over $N$ that captures how much each $Z_j$ influences $X$. In our setting, this will measure how much each agent's action was influenced by each other agent in its $r$-neighborhood.

Formally, let $B(Z)$ be the collection of real random variables $X$ that are measurable with respect to the sigma-algebra generated by $Z$ and such that $\E{X}=0$,  $\E{X^2}=1$. An influence distribution map $X \mapsto \mu_X$ assigns to each $X \in B(Z)$ a probability measure $\mu_X$ over $N$. We require $\mu_X$ to be supported on those $j$'s that determine $X$. That is, for any $S \subseteq N$ such that $X$ is measurable with respect to $\sigma(Z_S)$, it holds that $\mu_X(S)=1$.

A desirable property of an influence distribution map is that if $X,Y \in B(Z)$ are close, then the influence distributions $\mu_X$ and $\mu_Y$ should be close. In particular, for our purposes we will need that the total variation distance between $\mu_X$ and $\mu_Y$ should be small whenever $X$ and $Y$ are close in $L^2$. Recall that the total variation distance between two probability measures $\mu,\nu \in \Delta(N)$ is given by
\begin{align*}
    \tv{\mu,\nu} = \sup_{A \subseteq N}|\mu(A)-\nu(A)| = \sum_{i \in N}\frac{1}{2}|\mu(i)-\nu(i)|.
\end{align*}

To this end, given random variable $X \in B(Z)$, define a cooperative game $v_X$ for the set of player $N$ via
\begin{align}
    \label{eq:cooperative-game}
    v_X(S) = \Var\left(\E{X}{Z_S}\right).
\end{align}
That is, $v_X$ assigns to each non-empty finite coalition $S \subseteq N$ the variance of the random variable $\E{X}{Z_S}$. This is a measure for how much the collection of random variables $Z_S$ influences $X$. To see this, it is helpful to take a geometric perspective: $\E{X}{Z_S}$ is the projection, in the Hilbert space of square-integrable random variables, of the random variable $X$ to the subspace of $\sigma(Z_S)$-measurable random variables. The variance is the square of the norm, and when this is high $Z_S$ contains much information about $X$.

Let $\mu_X \colon N \to \R$ be the Shapley values\footnote{When $N$ is infinite the usual definition of the Shapley value cannot be applied, as one cannot draw uniformly a permutation of $N$. Instead, we use the Harsanyi dividend definition of the Shapley value, which is still well-defined.} of the game $v_X$.  Since $v_X \geq 0$, $v_X$ is monotone, and $v_X(N) = \Var(X)=1$, $\mu_X$ is indeed a probability measure on $N$. 

We call $\mu_X$ the Shapley influence distribution of $X$ with respect to $(Z_i)_i$. It is easy to see that if $X$ is measurable with respect to $\sigma(Z_S)$, and if $j \not \in S$, then $v_X(T \cup \{j\}) = v_X(T)$ for all $T$---i.e., $j$ is a null player---and thus $\mu_X(j)=0$. Hence the Shapley influence distribution is indeed an influence distribution.

The next proposition is the main technical contribution of this paper, showing that the Shapley influence distribution map has a Lipschitz property.
\begin{proposition}
  \label{prop:l2l1}
  Let $Z=(Z_j)_{j \in N}$ be independent random variables indexed by a finite or countable set $N$.  Then  for all $X,Y \in B(Z)$ the Shapley influence distribution map $\mu$ satisfies $\tv{\mu_X,\mu_Y} \leq
      \sqrt{\E{(X-Y)^2}}$.
\end{proposition}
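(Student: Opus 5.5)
The plan is to compute the Shapley influence distribution explicitly through the Hoeffding (Efron--Stein) orthogonal decomposition of $X$, and then reduce the total variation bound to a Cauchy--Schwarz estimate on the components of that decomposition.

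\textbf{Step 1: the Hoeffding decomposition.} For $X \in B(Z)$ and each finite $T \subseteq N$, I would define
\[
X_T = \sum_{U \subseteq T} (-1)^{|T|-|U|}\, \E{X}{Z_U},
\]
with $X_\emptyset = \E{X} = 0$. Independence of the $Z_j$ gives the standard facts:
\begin{itemize}
\item $X_T$ is $\sigma(Z_T)$-measurable;
\item $\E{X_T}{Z_S} = X_{T}\mathbbm{1}_{T\subseteq S}$;
\item the $X_T$ are pairwise orthogonal in $L^2$;
\item $\E{X}{Z_S} = \sum_{T \subseteq S} X_T$ for every finite $S$.
\end{itemize}
The key input in each case is that conditioning on $Z_S$ of a function of $Z_T$ equals integrating out the coordinates in $T\setminus S$.

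When $N$ is countable, I would exhaust $N$ by finite sets $S_n \uparrow N$. Martingale convergence gives $\E{X}{Z_{S_n}} \to X$ in $L^2$, because $X$ is $\sigma(Z)$-measurable. Hence $X = \sum_T X_T$ in $L^2$ and $\sum_{T} \|X_T\|_2^2 = \E{X^2} = 1$.

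\textbf{Step 2: Harsanyi dividends and Shapley values.} By orthogonality,
\[
v_X(S) = \Var(\E{X}{Z_S}) = \sum_{\emptyset \ne T \subseteq S} \|X_T\|_2^2 .
\]
So the Harsanyi dividend of each coalition $T$ is exactly $\|X_T\|_2^2 \ge 0$. The Shapley value (in the dividend definition mentioned in the footnote) is therefore
\[
\mu_X(j) = \sum_{T \ni j} \frac{\|X_T\|_2^2}{|T|}.
\]
The same formula holds for $Y$, with the decomposition components $Y_T$. Since $T\mapsto X_T$ is linear, the components of $X-Y$ are $X_T-Y_T$.

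\textbf{Step 3: the estimate.} Using the triangle inequality, and then the fact that each $T$ appears once for each of its $|T|$ elements,
\[
\sum_j |\mu_X(j)-\mu_Y(j)| \le \sum_j \sum_{T\ni j} \frac{\bigl|\|X_T\|^2-\|Y_T\|^2\bigr|}{|T|} = \sum_T \bigl|\|X_T\|-\|Y_T\|\bigr|\,\bigl(\|X_T\|+\|Y_T\|\bigr).
\]
Cauchy--Schwarz bounds the right-hand side by
\[
\Bigl(\sum_T (\|X_T\|-\|Y_T\|)^2\Bigr)^{1/2}\Bigl(\sum_T (\|X_T\|+\|Y_T\|)^2\Bigr)^{1/2}.
\]
For the first factor, the reverse triangle inequality together with orthogonality gives $\sum_T \|X_T-Y_T\|^2 = \E{(X-Y)^2}$, so the first factor is at most $\sqrt{\E{(X-Y)^2}}$. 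The second factor is at most $\sqrt{2\sum_T\|X_T\|^2 + 2\sum_T \|Y_T\|^2} = 2$. Halving gives $\tv{\mu_X,\mu_Y} \le \sqrt{\E{(X-Y)^2}}$.

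\textbf{Main obstacle.} The only delicate step is Step 1 in the countable case. There I must justify the completeness $X=\sum_T X_T$ via martingale convergence, and check that the dividend-based Shapley value is well defined. The latter follows because the dividends are nonnegative and sum to $1$.
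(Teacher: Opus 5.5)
Your proof is correct and follows essentially the same route as the paper. Your Hoeffding component $X_T$ is exactly the paper's $\sum_{U\in\U_T}\hat x_U U$, your dividends $\|X_T\|_2^2$ are the paper's $\hat x_T^2$, and the finish is the same triangle-inequality-plus-Cauchy--Schwarz estimate; the only differences are technical and slightly in your favor, since working with component norms and the reverse triangle inequality $\bigl|\|X_T\|-\|Y_T\|\bigr|\le\|X_T-Y_T\|$ avoids choosing orthonormal bases of each $\L_i$, and your martingale-convergence step justifies the completeness of the expansion that the paper asserts without proof.
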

Conceptually, this result shows that the Shapley influence distribution map is stable to small perturbations.\footnote{In Appendix~\ref{sec:shapley-stability} we discuss the stability of classical Shapley values, which are not Lipschitz.} This will be useful to us, as we will apply it to $X=a_i$ and $Y=a_j$ for a pair of highly correlated neighbors $(i,j)\in E$, and will be able to conclude that their Shapley influence distributions are close. The remainder of this section is devoted to the proof of this proposition.

Denote by $\L$ the set of $\sigma(Z)$-measurable random variables $X$ such that $\E{X}=0$ and $\E{X^2}<\infty$. The set $\L$ is a separable real Hilbert space, when equipped with the inner product $(X,Y) = \E{X \cdot Y}$. Denote by $\L_i$ the subspace of $\sigma(Z_i)$-measurable elements of $\L$. We observe that $\L_i$ is a closed subspace of $\L$. For each $i$, let $\U_i$ be a (finite or countable) orthonormal basis of $\L_i$. An orthonormal basis of $\L$ is then the collection $\U$ of random variables of the form $U = \prod_{i \in S}U_i$ where $S$ is a finite subset of $N$ and each $U_i$ is in $\U_i$.  We can thus write any $X \in \L$ as an (infinite) linear combination of elements of $\U$:
\begin{align*}
  X = \sum_{U \in \U}\hat x_U U, 
\end{align*}
where $\hat x_{U} = (X,U) \in \R$ is the coefficient of $U \in \U$. By Parseval, we have that for $X,Y \in \L$
\begin{align*}
  (X,Y) = \sum_{U \in \U}\hat x_{U} \cdot \hat y_{U}.
\end{align*}

Given a finite, non-empty $T \subseteq N$, let $\U_T$ be the collection of all such $U$, so that $\U = \cup_T \U_T$, and
\begin{align*}
  X = \sum_T\sum_{U \in \U_T}\hat x_U U.
\end{align*}

Using this, for a finite $S \subseteq N$ we can write
\begin{align*}
\E{X}{Z_S}=\E{\sum_T\sum_{U \in \U_T} \hat x_{U}U}{Z_S}
= \sum_T \sum_{U \in \U_T}\hat x_{U} \E{U}{Z_S}.
\end{align*}
Recall that for $U \in \U_T$,  $U = \prod_{i \in T}U_i$, where each $U_i$ is $\sigma(Z_i)$-measurable, and thus this is a product of independent random variables. Note that
\begin{align*}
  \E{U_i}{Z_S} =
  \begin{cases}
    U_i& \text{if } i\in S\\
    0& \text{if } i\not \in S.
  \end{cases}
\end{align*}
Hence, for $U \in \U_T$,
\begin{align*}
  \E{U}{Z_S} = 
  \begin{cases}
    U&\text{if } T \subseteq S\\
    0&\text{if } T\not \subseteq S.
  \end{cases}
\end{align*}
Thus,
\begin{align*}
  \E{X}{Z_S} = \sum_{T\subseteq S} \sum_{U \in \U_T}\hat x_{U}U.
\end{align*}
Since $\U$ is an orthonormal basis, we get by Parseval that
\begin{align*}
  v_X(S) = \Var\left(\sum_{T\subseteq S}\sum_{U \in \U_T}\hat x_{U}U\right) = \sum_{T\subseteq S}\sum_{U \in \U_T}\hat x_{U}^2.
\end{align*}
Denote
\begin{align}
  \label{eq:dividend}
  \hat x_T^2 = \sum_{U \in \U_T}\hat x_{U}^2,
\end{align}
so that
\begin{align*}
  v_X(S) =  \sum_{T\subseteq S}\hat x_T^2.
\end{align*}
This is exactly the Harsanyi dividend representation of a cooperative game \citep{harsanyi1982simplified}, with dividends $\hat x_T^2$. Therefore, by the standard Shapley value formula in terms of Harsanyi dividends,
\begin{align*}
\varphi_k(v_X)=\sum_{S\ni k}\frac{1}{|S|}\hat x_S^2.
\end{align*}

Fix $X,Y \in \L$ with $\E{X^2}=\E{Y^2}=1$. By the definition of $\mu$,
\begin{align*}
  \tv{\mu_X,\mu_Y} = \frac{1}{2}\sum_i|\mu_X(i)-\mu_Y(i)| =
  \frac{1}{2}\sum_i\left|\sum_{S \ni i}\frac{1}{|S|} \hat x_S^2-\sum_{S \ni
    i}\frac{1}{|S|} \hat y_S^2\right|.
\end{align*}
By the triangle inequality we have
\begin{align*}
  \tv{\mu_X,\mu_Y} = \frac{1}{2}\sum_i\left|\sum_{S \ni
    i}\frac{\hat x_S^2-\hat y_S^2}{|S|}\right| \leq \frac{1}{2}\sum_S|\hat x_S^2-\hat y_S^2|.
\end{align*}
Recalling the definition of $\hat x_S^2$ in \eqref{eq:dividend},
\begin{align*}
  \tv{\mu_X,\mu_Y} \leq \frac{1}{2}\sum_S\left|\sum_{U \in \U_S}\hat x_{U}^2-\sum_{U \in \U_S}\hat y_{U}^2\right|,
\end{align*}
and by another application of the triangle inequality
\begin{align*}
  \tv{\mu_X,\mu_Y} \leq \frac{1}{2}\sum_{U \in \U}\left|\hat x_{U}^2-\hat y_{U}^2\right|.
\end{align*}
By Cauchy-Schwarz,
\begin{align*}
  \sum_{U \in \U}|\hat x_U^2-\hat y_U^2| = \sum_{U \in \U}|\hat x_U-\hat y_U|\cdot |\hat x_U+\hat y_U| \leq
  \sqrt{\sum_{U \in \U}(\hat x_U-\hat y_U)^2}\cdot \sqrt{\sum_{U \in \U}(\hat x_U+\hat y_U)^2}.
\end{align*}
By Parseval,  $\sum_U(\hat x_U-\hat y_U)^2=\E{(X-Y)^2}$  and
$\sum_U(\hat x_U+\hat y_U)^2=\E{(X+Y)^2}$. The latter is at most $4$, and so
\begin{align*}
  \tv{\mu_X,\mu_Y} \leq \sqrt{\E{(X-Y)^2}}.
\end{align*}
This completes the proof of Proposition~\ref{prop:l2l1}.

\begin{remark}
    Proposition~\ref{prop:l2l1} is related to \cite[Lemma 2.9]{galicza2024sparse}, which establishes a looser relationship between covariances and the ``clue'', a way of measuring the dependence of $X$ on the input variables $(Z_j)_{j\in N}$ using Fourier analysis on the hypercube. The proof of Proposition~\ref{prop:l2l1} shows that this ``clue'' is in fact \emph{equal} to the Shapley influence distribution when the $Z_j$ are i.i.d.\ unbiased coinflips, so that our Proposition~\ref{prop:l2l1} strengthens and clarifies their Lemma 2.9. 
\end{remark}

\subsection{Coupling}

In addition to the influence distributions of Proposition~\ref{prop:l2l1}, to prove Theorem~\ref{thm:non-amenable-general} we will make use of what we call grand couplings.

Let $\mu_1,\mu_2$ be probability measures on a finite or countable set $N$. A coupling of these measures is a pair of random variables $L_1,L_2$ defined on the same probability space and such that $L_i$ has a distribution $\mu_i$. We will be interested in couplings that minimize the probability that $L_1\neq L_2$. A classical result \citep[due to Doeblin, see][]{lindvall1991w} is that the minimum of $\Pr{L_1 \neq L_2}$ over all couplings of $\mu_1,\mu_2$ is achieved and is exactly equal to the total variation distance $\tv{\mu_1,\mu_2}$.

One can ask the same question for three measures, $\mu_1,\mu_2,\mu_3 \in \Delta(N)$. In this case it turns out that one can always find a coupling (or perhaps a throupling) $L_1,L_2,L_3$ such that for each $i,j \in \{1,2,3\}$ it holds that $\Pr{L_i \neq L_j} \leq 2 \tv{\mu_i,\mu_j}$. Remarkably, the same holds for coupling of more than three measures. In fact, the following proposition, which is due to \cite{kleinberg2002approximation} \citep[see also][for a closely related idea]{broder1997resemblance}, shows that one can couple all the probability measures on $N$ and achieve the same guarantee. A particularly good exposition is given by \cite{angel2019pairwise}. 

\begin{proposition}[\cite{kleinberg2002approximation,angel2019pairwise}]
\label{prop:grand-coupling}
Let $N$ be a finite or countable set, and let $(\mu_i)_{i \in I}$ be the set of all probability measures on $N$, indexed by a set $I$. Then there exists a probability space with random variables $(L_i)_{i \in I}$ such that $L_i$ has a distribution $\mu_i$, and for each $i,j \in I$ it holds that $\Pr{L_i \neq L_j} \leq 2 \tv{\mu_i,\mu_j}$.
\end{proposition}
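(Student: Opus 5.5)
The plan is to build all the $L_i$ simultaneously from a single Poisson process, by a rejection-sampling construction. This follows the approach of \cite{angel2019pairwise}.

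\textbf{Construction.} Let $\Pi$ be a Poisson point process on $N\times[0,1]\times[0,\infty)$. Its intensity is counting measure on $N$ times Lebesgue measure on $[0,1]$ times Lebesgue measure on $[0,\infty)$. We write its points as $(x,u,t)$: $x$ is a candidate label, $u$ an acceptance level, $t$ a time.

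For a probability measure $\mu$ on $N$, consider the points of $\Pi$ in $A_\mu=\{(x,u,t): u<\mu(x)\}$. We define $L_\mu$ to be the $x$-coordinate of the point of $\Pi\cap A_\mu$ with the smallest time $t$. All the $L_\mu$ live on one probability space, the one carrying $\Pi$, and this gives the family $(L_i)_{i\in I}$.

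\textbf{Marginals.} By the mapping and restriction properties of Poisson processes, projecting $\Pi\cap A_\mu$ to $(x,t)$ gives a Poisson process on $N\times[0,\infty)$ with intensity $\mu\otimes\mathrm{Leb}$. Its total rate in $t$ is $\sum_x\mu(x)=1$. So almost surely it has a unique earliest point, using countability of $N$ and continuity of the times. By the standard competing-exponentials fact, the label of that point has law $\mu$. Hence $L_\mu\sim\mu$.

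\textbf{Pairwise bound.} Fix $\mu,\nu$ and write $d=\tv{\mu,\nu}$. Set $M(x)=\max(\mu(x),\nu(x))$ and $m(x)=\min(\mu(x),\nu(x))$. Then $\sum_x M(x)=1+d$ and $\sum_x m(x)=1-d$.

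Let $p=(x,u,t)$ be the earliest point of $\Pi$ in $A_M=\{u<M(x)\}$, which is a.s.\ unique. Since $A_\mu\cup A_\nu=A_M$, the point $p$ lies in $A_\mu$ or in $A_\nu$. Suppose in addition that $u<m(x)$. Then $p$ lies in both $A_\mu$ and $A_\nu$, and it is the earliest point of each because both are subsets of $A_M$. In that case $L_\mu=L_\nu=x$.

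By the same competing-rates computation as above, the projection of $p$ to $(x,u)$ has density proportional to $\mathbbm{1}_{u<M(x)}$. Therefore
\begin{align*}
\Pr{L_\mu=L_\nu}\ \ge\ \Pr{u<m(x)}=\frac{\sum_x m(x)}{\sum_x M(x)}=\frac{1-d}{1+d}.
\end{align*}
This gives $\Pr{L_\mu\neq L_\nu}\le \frac{2d}{1+d}\le 2d$, which is the claimed bound.

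\textbf{Main obstacle.} The only delicate step is rigor about the Poisson process: showing that each restricted process has an a.s.\ unique first point, and that the joint law of the first point's $(x,u)$ is the normalized intensity. Both follow from the standard fact that for independent Poisson processes with rates summing to a finite total, the first arrival comes from component $k$ with probability proportional to its rate. No issue arises from $I$ being uncountable, since every $L_\mu$ is a measurable function of the single process $\Pi$.
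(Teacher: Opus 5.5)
Your proof is correct, but the coupling you construct is not the one the paper uses. The paper takes a single unit exponential $E_k$ per label, sets $L_i=\argmin_{k} E_k/\mu_i(k)$, and cites Angel--Spinka for the pairwise bound. You instead take a Poisson process on $N\times[0,1]\times[0,\infty)$ and select the earliest point under the graph of $\mu$.

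These are genuinely different joint laws, not reparametrizations of each other. In the paper's version, the clocks of label $k$ under $\mu$ and under $\nu$ are the deterministic rescalings $E_k/\mu(k)$ and $E_k/\nu(k)$ of one variable. In yours, the first acceptance times at $k$ coincide only with probability $\min(\mu(k),\nu(k))/\max(\mu(k),\nu(k))$. The agreement probabilities $\Pr{L_\mu=L_\nu}$ also differ in general, though both are at least $(1-d)/(1+d)$ with $d=\tv{\mu,\nu}$.

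For the exponential coupling, the bound comes from an explicit computation, valid for $\mu(k),\nu(k)>0$ and then summed over $k$:
\[
\Pr{L_\mu=L_\nu=k}=\Bigl(\sum_j \max\bigl(\tfrac{\mu(j)}{\mu(k)},\tfrac{\nu(j)}{\nu(k)}\bigr)\Bigr)^{-1}\ \ge\ \frac{\min(\mu(k),\nu(k))}{1+d}.
\]
Your construction avoids this calculation by exhibiting one transparent agreement event: the first point under $\max(\mu,\nu)$ lies under $\min(\mu,\nu)$. It also extends verbatim to non-discrete spaces via densities with respect to a $\sigma$-finite reference measure. The paper's version is more economical, needing only countably many exponentials.

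One small point of rigor. The null event on which $\Pi\cap A_\mu$ has no unique earliest point depends on $\mu$, and with uncountably many $\mu$ you cannot discard all of these at once. So define $L_\mu$ to be a fixed default label on that event. This is harmless, because each claim involves at most two indices.
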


We describe a simple construction that achieves this, given by \cite{angel2019pairwise}; we refer the reader to the proof there. Let $(E_k)_{k \in N}$ be i.i.d.\ random variables with the unit exponential distribution (i.e., $\Pr{E_k \leq x}=1-\ee^{-x}$ for $x \geq 0$) defined on a common probability space. For each $\mu_i \in \Delta(N)$, let 
\begin{align*}
    L_i = \argmin_{k \in N}\frac{E_k}{\mu_i(k)}.
\end{align*}
Then $\Pr{L_i=k}=\mu_i(k)$ and 
\begin{align*}
    \Pr{L_i \neq L_j} \leq \frac{2\tv{\mu_i,\mu_j}}{1+\tv{\mu_i,\mu_j}} \leq 2\tv{\mu_i,\mu_j}.
\end{align*}
The relevance of Proposition~\ref{prop:grand-coupling} to amenability of graphs was first noted by \cite{hutchcroft2024relation}. We follow a similar approach, namely, to use the grand coupling to construct random partitions of the vertex set according to common outputs, so that the average surface-to-volume ratio of cells is bounded by twice the average total variation distance between neighbors. 
\subsection{Proof of Theorem~\ref{thm:non-amenable-general}}

Given Propositions~\ref{prop:l2l1} and~\ref{prop:grand-coupling}, we are ready to prove the main result of this section.
\begin{proof}[Proof of Theorem~\ref{thm:non-amenable-general}]

Since each $X_i$ is measurable with respect to the sigma-algebra generated by $(Z_j)_{j \in B_r(i)}$, by Proposition~\ref{prop:l2l1} we can assign to each $X_i$ a probability measure $\mu_i = \mu_{X_i}$ on $N$ such that $\mu_i(B_r(i))=1$ and such that for any pair of vertices $i, j\in N$, the total variation distance $\tv{\mu_i,\mu_j}$ is at most $\sqrt{\E{(X_i-X_j)^2}}$. Using the grand coupling of Proposition~\ref{prop:grand-coupling} we can couple all these measures in a process $(L_i)_{i\in N}$ taking values in $N$ such that $L_i$ has law $\mu_i$ and
  \begin{align*}
    \Pr{L_i \neq L_j} \leq 2\tv{\mu_i,\mu_j} \leq 2\sqrt{\E{(X_i-X_j)^2}}.
  \end{align*}
  
  Let $D$ be the (random) subset of $E$ given by
  \begin{align*}
    D = \{(i,j) \in E\,:\, L_i \neq L_j\}.
  \end{align*}
  Then $L_i$ is constant on each connected component of $G'=(N,E
  \setminus D)$. Since $\mu_i(B_r(i))=1$ it follows that $L_i \in B_r(i)$, and so the connected component of $i$ is a subset of $B_{r}(L_i)$.

  Finally,
  \begin{align*}
    \E{|D|} = \sum_{(i,j) \in E}\Pr{L_i \neq L_j} \leq 2\sum_{(i,j)
      \in E}\sqrt{\E{(X_i-X_j)^2}} \leq  \sqrt{8\eps}|E|,
  \end{align*}
  where the last inequality follows from Jensen's inequality.
\end{proof}

\begin{remark}
\label{rem:virag}
Both the results and proof of Theorem~\ref{thm:non-amenable-general} is reminiscent of the methods of \cite*{csoka2020entropy}, who worked under a ``uniform for small sets'' notion of non-amenability rather than our weaker ``statistical'' notion of non-amenability (i.e., non-hyperfiniteness). The authors of that paper work primarily with the entropy rather than the variance, but discuss variance versions of their inequalities in Section 4.2. Their method is somewhat different than ours and does not identify the Lipschitz property of Proposition~\ref{prop:l2l1}.
\end{remark}

\section{More on coordination on amenable graphs}

Section \ref{sec:coordination_private} gives a benchmark construction for achieving low-inefficiency coordination on amenable graphs through leader equilibria. We now ask how far this conclusion extends when coordination is more constrained. 
We first weaken the communication structure: agents cannot rely on private communication and instead use (locally) public messages. We then turn to the objects that the benchmark construction takes as given---the partition and the leaders---and show how agents can coordinate on both through local communication. In both cases, the main conclusion survives: amenability still allows agents to coordinate with low inefficiency. Nevertheless, the restrictions on communication and local implementation come at some cost in efficiency.

\subsection{Equilibria with locally public communication}

Theorem \ref{thm:amenable} shows that on amenable graphs there are leader equilibria that achieve low inefficiency. These equilibria make critical use of private communication and, in particular,
hinge on the leader being able to communicate selectively with the members of their community. In light of this, one may wonder what can be achieved without private communication, using only public messages to coordinate.

In this section, we consider the following variant of the communication stage. Instead of sending private messages, each agent $i$ can choose a a single message $m_i$ to broadcast to all agents in $B_r(i)$. We call this message  \emph{locally public}, as the same message is observed by all agents in $i$'s $r$-neighborhood. This can capture a public action, such as posting a sign
on one's yard. %As with private messages, locally public messages sent directly to agents within distance $r$ can succinctly model a more complicated process.

Formally, agents first choose locally public messages simultaneously. After each agent $i$ observes $m_{B_r(i)}=(m_j)_{j\in B_r(i)}$, agents simultaneously choose actions. Strategies, mixed strategies, and action-symmetry are defined as before, with locally public messages
replacing private messages. Thus a pure strategy of agent $i$ is a pair
$\sigma_i=(m_i,o_i)$, where $o_i$ assigns an action to every realization of
$m_{B_r(i)}$.

The difficulty is that the leader's message is now observed by everyone in their $r$-neighborhood, not only by the members of their community. Indeed, consider the strategy profile in which each leader $\ell_k$ chooses uniformly at random a locally public message $m_{\ell_k}\in\{-1,+1\}$, and all agents $i\in C_k$ take action $a_i=m_{\ell_k}$. Suppose there is an agent $i$ in a community $C$ who has more neighbors in a different community $C'$ than in $C$, and who is also within the radius of communication of $\ell'$, the leader of $C'$. Then $i$ would have a profitable
deviation to choosing $a_i=m_{\ell'}$. Since $(\varepsilon,r)$-amenability alone does not exclude such an agent, the private-message construction from Theorem \ref{thm:amenable} cannot be implemented directly with locally public messages.

Our main result of this section is that even with only locally public communication, there still exist low-inefficiency leader equilibria on amenable graphs, although with slightly worse guarantees.

\begin{theorem}
    \label{thm:amenable-no-comm}
    Suppose $G$ is $(\eps,r)$-amenable. Then there exists a public communication, action-symmetric leader equilibrium with  radius of communication $r$ and expected inefficiency at most $\eps \dmax$.
\end{theorem}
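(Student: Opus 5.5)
The plan is to start from the partition that witnesses $(\eps,r)$-amenability and modify it locally until every agent strictly or weakly prefers to follow its own leader, even though it can see the other leaders' public messages. The obstacle identified in the text is an agent $i\in C$ that sees the message $m_{\ell'}$ of another leader $\ell'$ and has many neighbors in that leader's community $C'$.

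The first step is to write down the exact incentive condition for public leader profiles. Fix a partition into communities, each with a leader $\ell_C$ satisfying $C\subseteq B_r(\ell_C)$. Each leader broadcasts a uniform sign, non-leaders broadcast $\emptyset$, and everyone plays their leader's message. For agent $i\in C$, let $n_{C'}(i)$ be the number of neighbors of $i$ in community $C'$, and let $V(i)$ be the set of communities whose leader lies in $B_r(i)$, so that $i$ observes their messages. Neighbors in communities whose leaders $i$ does not see match $i$ with probability $1/2$ whatever $i$ does. Neighbors in observed communities match exactly when $a_i$ equals that leader's message. The leaders' signs are independent, so every sign pattern has positive probability, and the best response is the weighted majority $\mathrm{sign}\sum_{C'\in V(i)} n_{C'}(i)\,m_{\ell_{C'}}$. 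Following one's own leader is therefore a best response if and only if
\begin{align*}
  n_C(i) \;\geq\; \sum_{C'\in V(i)\setminus\{C\}} n_{C'}(i).
\end{align*}
Call an agent satisfying this inequality \emph{insular}. Message deviations are harmless as before, because non-leader messages are ignored. The profile is clearly action-symmetric. Its expected inefficiency equals the fraction of ordered edges crossing between communities: each such edge mismatches with probability $1/2$, and $I$ carries the factor $2$. So it suffices to produce a partition into sets of radius at most $r$ in which every agent is insular and at most $\eps\dmax|E|$ ordered edges are cut. Connectivity of the parts is not needed for a leader equilibrium, as remarked after the definition.

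The second step is the construction. Let $D$ be the cut set of the amenable partition, with $|D|\le\eps|E|$, and let $S$ be the set of agents incident to $D$, so that $|S|\le|D|$. Agents outside $S$ have all their neighbors in their own community and are automatically insular. Only agents in $S$ can violate the condition, and I would repair them one at a time by a potential argument. A violating agent $i$ either moves into a visible community $C'$ in which it has the most neighbors, or is split off. Moving into $C'$ preserves radius at most $r$, because $\ell_{C'}$ is visible to $i$ and hence $i\in B_r(\ell_{C'})$. Splitting off makes $i$ a singleton community that is its own leader. A singleton needs care: it has $n_{\{i\}}(i)=0$, so it must see no community containing a neighbor. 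I would arrange this by also declaring every neighbor of such an $i$ that lies in $S$ a singleton. Alternatively, and perhaps more simply, in one stroke make every agent of $S$ a singleton, while leaving the interior agents with their original leader. Each agent of $S$ then has no neighbors in non-singleton communities whose leaders it can use, except interior ones. The bookkeeping target is that the total number of cut ordered edges ends up at most $\dmax|S|\le\dmax|D|\le\eps\dmax|E|$, by charging every newly cut edge to an endpoint in $S$.

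The step I expect to be the main obstacle is ensuring that the modifications do not create new violators, that is, closing the iteration. Turning a boundary agent into a singleton can make an interior neighbor lose an in-community neighbor, or can leave the singleton seeing a leader whose community contains several of its neighbors. I would first try the one-shot version: singletonize all of $S$, then check insularity of interior agents directly. Such an agent has all its non-singleton neighbors in its own community, and singleton neighbors contribute $n_{\{j\}}=1$ only if the singleton $j$ is within distance $r$, so the inequality may need a further peeling pass. I would control the cost of any peeling passes by a potential function, namely the number of cut edges, with each violating move charged to at most $\dmax$ edges at an agent of $S$. This is where the $\dmax$ factor, in place of the factor $\eps$ of Theorem~\ref{thm:amenable}, should come from.
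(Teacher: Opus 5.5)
\textbf{There is a genuine gap, and it sits where you expected: the agents you split off.}

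Your framework requires every agent, including the repaired ones, to follow a single leader, and then demands insularity against all visible leaders. Under your one-shot plan this fails for \emph{every} split-off agent.

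Take $i\in S\cap C$, where $C$ is a part of size at least two. Parts are connected, so $i$ has a neighbor $j\in C$.
\begin{itemize}
\item If $j\in S$, then $j$ is a singleton leader at distance $1$ from $i$, so $n_{\{j\}}(i)=1$.
\item If $j\notin S$, then $j$ still follows $\ell_C$. This leader is visible to $i$, because $i\in C\subseteq B_r(\ell_C)$.
\end{itemize}
Either way the right-hand side of your inequality is at least $1>0=n_{\{i\}}(i)$. So $i$ strictly prefers to copy a visible neighbor's leader whenever the signs disagree.

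Your proposed patch, singletonizing the neighbors of $i$ that lie in $S$, makes this worse, since adjacent singleton leaders observe each other's messages. Moving a violator into the visible community where it has the most neighbors does not fix things in general either. Insularity is a majority condition, not a plurality one: an agent with three neighbors in three different visible communities is insular nowhere while its neighbors stay put. You also give no argument that such moves terminate.

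Your charging scheme, with each move charged to an agent of $S$, also breaks down. Once $S$ is cut out, interior agents (which are not in $S$) can become violators. Their number is controlled only by $|\partial(C\setminus S)|$, which can be of order $\dmax|S\cap C|$.

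\textbf{The missing idea is that the removed agents should not follow any leader at all.} The paper proceeds as follows.
\begin{itemize}
\item From each original part $C_k$ it peels, one at a time, any vertex with $|N_i\cap C_k|<|N_i\setminus C_k|$. Each removal lowers $|\partial C_k|$ by at least one. So at most $\sum_k|\partial C_k|$ vertices are removed, and paying for all $\le\dmax$ edges at each removed vertex gives the $\eps\dmax|E|$ bound (Proposition~\ref{prop:stable}).
\item A removed leader stays on as the \emph{banished} leader of what remains.
\item Members of the surviving sets satisfy the stability condition. Unlike your insularity condition, it counts \emph{all} outside neighbors. Hence following the leader is optimal no matter how outsiders behave, which matters once some outsiders are not pure leader-followers.
\item The removed agents play an action-symmetric equilibrium of the finite game they face given the leaders' messages. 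It exists by Nash's theorem on equilibria invariant under a symmetry of the game, here $\iota$.
\item The edges incident to removed agents are already in the budget.
\item The action-symmetry of the removed agents' strategies is what makes every leader, banished or not, indifferent between broadcasting $+1$ and $-1$.
\end{itemize}
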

The main idea behind this proof is the notion of stable communities. We say that a community $C \subseteq N$ is \textit{stable} if for all $i \in C$ it holds that $|N_i \cap C| \geq |N_i \setminus C|$. That is, each agent in $C$ has at least as many neighbors in $C$ as outside of $C$. The advantage of a stable community is that members do not have an incentive to deviate from the community's coordinated action, even if they know what their neighbors from outside the community will do.\footnote{A stable set is a $1/2$-cohesive set in the terminology of \cite{morris2000contagion}.}

Suppose that $(C_1,C_2,\ldots,C_K)$ are disjoint stable communities, and let $U = N \setminus \cup_k C_k$ be the agents who are not members of any of these communities, so that $(C_1,C_2,\ldots,C_K,U)$ is a partition of $N$. For each $k$, fix a designated leader $\ell_k$ for $C_k$. We require that $C_k$ is contained in $B_r(\ell_k)$, and that $\ell_k$ is either a member of $C_k$ or an agent that does not belong to any community. %If $\ell_k$ is not a member of $C_k$, we call him a banished leader. 
The leader $\ell_k$ of each stable community will send a locally public message $m_{\ell_k}$ chosen uniformly from $\{-1,+1\}$. Non-leaders will send an empty public message. Members of stable communities will follow their leader's message. We do not specify explicitly what actions members of $U$ choose, and instead show in the proof that there exist action-symmetric strategies for these agents that make this an equilibrium. Its inefficiency will be at most the size of the set of edges
\begin{align*}
    D =  \left(\cup_k \partial C_k\right) \bigcup \{(i,j)\,:\, i \in U, j \in N_i\},
\end{align*}
since miscoordination can only occur on community boundaries, or outside the communities on edges adjacent to $U$. The formal proof is given in the Appendix \ref{app:amanble-no-comm}.

Given this, to prove Theorem~\ref{thm:amenable-no-comm}, it remains to show that if $G$ is $(\eps,r)$-amenable, then it admits a  set of edges $D$ of size at most $\eps\dmax|E|$ whose removal leaves stable communities and a set of isolated vertices, such that each community $C_k$ has a leader $\ell_k$ whose distance to every member of $C_k$ is at most $r$.
\begin{proposition}
\label{prop:stable}
     Suppose that a finite graph $G=(N,E)$ is $(\eps,r)$-amenable. Then there exists a set of edges $D \subseteq E$ such that $|D| \leq \eps \dmax|E|$, and every connected component of $G'=(N,E \setminus D)$is either a singleton or a stable set $C$. Moreover, if
$C_1,\ldots,C_K$ are the stable components of $G'$, then there exist distinct
vertices $\ell_1,\ldots,\ell_K$ such that, for each $k$, $\ell_k$ is either in
$C_k$ or is a singleton component of $G'$, and $C_k\subseteq B_r(\ell_k)$. 
\end{proposition}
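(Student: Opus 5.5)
The plan is to start from an $(\eps,r)$-amenable partition and peel off unstable vertices one at a time. A potential argument controls how many vertices are peeled, and the peeled vertices then double as singletons. Fix a partition $\{P_1,P_2,\ldots\}$ of $N$ into connected parts of radius at most $r$, with leaders $\ell_{P}$ satisfying $P\subseteq B_r(\ell_P)$. Let $D_0$ be the set of edges joining different parts, so $|D_0|\le\eps|E|$. Call a vertex $i$ in a current part $C$ \emph{unstable} if $|N_i\cap C|<|N_i\setminus C|$. While some unstable vertex exists, move it out of its part and make it a singleton. Singletons are never touched again. The process ends when every non-singleton part is stable in this sense.

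\textbf{Counting.} Let $\Phi$ be the number of edges with both endpoints in non-singleton parts but in different parts. Also count the edges from non-singleton parts to singletons. Removing an unstable $i$ from $C$ changes $|\partial C|$ by $|N_i\cap C|-|N_i\setminus C|\le -1$. So a potential such as $\sum_{C \text{ non-singleton}}|\partial C|$ (computed in $G$) drops by at least $1$ at each step, up to the bookkeeping of edges from $i$ to other parts. Hence the number $R$ of removals is at most its initial value, which is at most $2|D_0|$ (I expect the constant $1$ with careful bookkeeping).

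Let $D$ be the final set of edges between distinct parts. It consists of the remaining boundaries of non-singleton parts plus the edges incident to removed vertices. This gives
\begin{align*}
|D| \le (|D_0|-R) + \dmax R \le \dmax|D_0| \le \eps\dmax|E|.
\end{align*}
This requires showing that each removal destroys at least one original boundary edge net; I will verify this by tracking edges incident to $i$.

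\textbf{Components are stable.} A final part $C$ may be disconnected in $G'=(N,E\setminus D)$. However, all edges of $G$ inside $C$ survive in $G'$. So for a component $K$ of $C$ and $i\in K$ we have $N_i\cap C\subseteq K$, hence $|N_i\cap K|=|N_i\cap C|\ge|N_i\setminus C|\ge|N_i\setminus K|$, and $K$ is stable. Vertices of original parts that were singletons with positive degree are immediately unstable. They become singleton components, which is consistent with the statement.

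\textbf{Leaders (main obstacle).} Every component $K$ arising from an original part $P$ lies in $B_r(\ell_P)$. If $\ell_P$ survives in some component, that component uses $\ell_P$. If $\ell_P$ was peeled, it is a singleton of $G'$ and can lead one of the components of $P$. The difficulty is that $P$ may split into several stable components, which would need distinct leaders each within distance $r$ of its whole component.

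The first thing I would try is to keep, for each $P$, only the component containing (or led by) $\ell_P$. I would show the other components can also be dissolved into singletons within budget. A component $K$ of $P$ not containing $\ell_P$ is separated from $\ell_P$ inside the connected set $P$ by peeled vertices. I would try to charge its dissolution to the boundary edges that made those vertices unstable, perhaps by running the peeling in an order that grows out from $\ell_P$.

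Failing that, I would use a peeled vertex adjacent to $K$ on a geodesic toward $\ell_P$ as $K$'s leader. This would require a slightly stronger radius bound, obtained by starting from an $(\eps,r')$-partition with $r'<r$, or by arguing within the original ball $B_r(\ell_P)$.
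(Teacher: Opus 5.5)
\textbf{Verdict.} Your proposal has a genuine gap at the leader step, which you leave open. Everything before it is right.

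Your peeling procedure, the potential argument, and the bound $|D|\le(|D_0|-R)+\dmax R\le\dmax|D_0|$ are exactly the paper's argument. Your check that each connected piece of a peeled part is stable is also correct, and goes beyond what the paper checks. The counting needs no extra bookkeeping. Stability of $C$ and $\partial C$ are computed in $G$ and depend only on $C$, so each part is peeled on its own. Every deletion lowers $|\partial C|$ by at least one, so $\sum_P|\partial P'|\le\sum_P|\partial P|-R$ with constant $1$.

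Neither repair you sketch for the leaders works.
\begin{itemize}
\item For (a), the order of peeling is irrelevant. A vertex unstable in $C$ stays unstable in every $C''\subseteq C$ containing it, so peeling always ends at the largest stable subset of $P$. Moreover, dissolving a component $K$ puts every edge incident to $K$ into $D$, and this cannot be charged to $\partial P$. Suppose $P$ is two large stable blobs joined only through $\ell_P$, which has one neighbor in each blob and three outside $P$. Then $|\partial P|=3$ and the budget $\dmax|\partial P|$ is $O(1)$ per part in a bounded-degree graph. Yet dissolving one blob costs order $|P|$ edges.
\item For (b), a peeled vertex $u$ bordering $K$ only gives $K\subseteq B_{r+d(u,\ell_P)}(u)$. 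An $(\eps,r')$-partition with $r'<r$ is not provided by the hypothesis. Also, one peeled vertex can border several components, so distinctness can fail.
\end{itemize}

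The idea you are missing is the one the paper uses: do not split $P'=P\setminus U$ into connected components at all. Treat the whole peeled set $P'$ as one stable community. It is stable as a set, because peeling stops exactly when every $i\in P'$ satisfies $|N_i\cap P'|\ge|N_i\setminus P'|$. Give it the leader $\ell_P$, which has all the required properties:
\begin{itemize}
\item $\ell_P$ either lies in $P'$ or is a peeled, hence singleton, vertex of $G'$;
\item $P'\subseteq P\subseteq B_r(\ell_P)$;
\item the leaders are distinct across parts, since the parts are disjoint.
\end{itemize}
This is all Theorem~\ref{thm:amenable-no-comm} uses, since leader equilibria do not need connected communities.

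Your worry is fair as a reading of the wording. If every connected component of $G'$ must get its own leader, the paper's proof does not handle the splitting either; it implicitly identifies the stable ``components'' with the sets $C_k'$. But as written, your proposal establishes the leader claim under neither reading.
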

This proposition states that we can remove a small number of edges and be left with stable communities, and some isolated vertices. Each stable community has a leader; the leader need not be a member of the community, but if they are not a member then they are one of the isolated vertices, and are still within distance $r$ to each member of the community. Distinct communities have distinct leaders. 

The idea behind the proof of this proposition is simple: start with communities separated by a small set of edges, whose existence is given by the fact that $G$ is $(\eps,r)$-amenable. If a community is not stable, it has a member with more neighbors outside the community than inside it. Banish this member to its own singleton community, and repeat until all communities are stable or singletons. The crux is to show that this ends after a small number of steps.
\begin{proof}[Proof of Proposition~\ref{prop:stable}]

Let $(C_1,\dots,C_K)$ witness $(\varepsilon,r)$-amenability, so each $C_k$ has radius at most $r$ and
$\sum_k |\partial C_k|\le \varepsilon |E|$. If $C_k$ is stable, set $C_k':=C_k$. Otherwise, there exists
$i\in C_k$ such that $|N_i\cap C_k|<|N_i\setminus C_k|$. Let $S:=C_k\setminus\{i\}$. Removing $i$ decreases
the boundary by at least $1$: it removes $|N_i\setminus C_k|$ boundary edges and creates $|N_i\cap C_k|$ new
boundary edges, so $|\partial S|\le |\partial C_k|-1$. If $S$ is still unstable, repeat the same operation. Since
$|\partial(\cdot)|$ is a nonnegative integer that drops by at least $1$ each time, this procedure terminates after
at most $|\partial C_k|$ deletions, yielding a set $C_k'\subseteq C_k$ that is either empty or stable.

Let $U:=N\setminus \bigcup_k C_k'$ be the set of vertices not contained in the resulting stable sets. By the bound
above, the peeling procedure deletes at most $|\partial C_k|$ vertices from $C_k$, hence
$|U|\le \sum_k |\partial C_k|$. Now let $D$ be the union of $\bigcup_k \partial C'_k$ and all edges incident to the vertices in $U$. Then, 
$$ |D| \le \sum_k |\partial C'_k| + d_{\max}|U|.$$

Since the number of boundary edges decreased by at least the size of $U$,
$$\sum_k|\partial C'_k| + d_{\max}|U| \leq  (\sum_k |\partial C_k| -|U|)+ d_{\max}|U| \leq d_{\max} \sum_k |\partial C_k|\le \varepsilon d_{\max}|E|.$$

In $G'=(N,E\setminus D)$, every vertex in $U$ is isolated, hence a singleton component. Any other vertex lies in some
$C_k'\subseteq C_k$, which is stable. Moreover, since $C_k$ has radius at most $r$, there exists $\ell_k\in C_k$ within distance at most $r$ from all $i\in C_k$, and therefore also from all $i\in C'_k$. These leaders are distinct, because each $\ell_k$ is chosen from
the original part $C_k$, and the parts $(C_1,\ldots,C_K)$ are disjoint. Note that $\ell_k$ need not belong to $C'_k$, as it may be among the deleted vertices of $C_k$. 
\end{proof}

\subsection{Local strategy profiles}
The proof of Theorem~\ref{thm:amenable} relies on a given partition of the social network into connected communities with small boundaries. One may object that agreeing on this partition requires some global coordination, and that it would be interesting to know whether coordination on actions can be achieved without first coordinating on communities. 

We show that this is possible by allowing agents to coordinate locally through both public and private communication. Thus, we again alter the communication stage and allow agents to simultaneously send both a private message and a public message. Agents will use the public messages to construct communities, and to choose their leaders. As we show in the Appendix \ref{app:network-symmetric}, the choice of leaders is a relatively easy problem to solve: given a partition, we can choose a leader locally at random using public messages. Agents then proceed with the leader equilibrium, coordinating on actions using private messages. The more challenging problem is the local choice of partition.

\begin{comment}

Leader equilibria also require a choice of leader, which again can be viewed as a coordination problem. As we show in the appendix, this is a relatively easy problem to solve: given a partition, we can choose a leader locally at random using public messages, and then proceed with the leader equilibrium, coordinating on actions using private messages. The more challenging problem is the local choice of partition.
\end{comment}

We study this problem for graphs that ``look the same'' around each vertex, up to radius $r$. An illustrative example is the $n\times n$ torus: a two dimensional grid, in which nodes at the boundaries are connected to the corresponding nodes on the opposite side. In this graph the $r$-neighborhoods of all agents are identical.

We consider strategy profiles that are local, in the sense that they do not require a globally-coordinated partition of the graph into communities, and also do not require a priori coordination on leaders. In this setting, we show that amenability remains a sufficient condition for coordination, but with weaker guarantees. Specifically, we show that for any graph that is $(\eps,r)$-amenable there exists a local, action-symmetric equilibrium with expected inefficiency at most $\varepsilon\left(1+\log\frac{1}{\eps} \right)$. Thus, inefficiency is still small when $\eps$ is small, but larger than what we can guarantee without the locality restriction.

The idea behind the local construction of communities is that agents use public communication to decide locally which edges of the original graph to remove; the communities are then the connected components of the remaining graph. To this end, fix a connected subgraph $F$ of radius $r$; on the torus we take $F$ to be an $(r+1)\times(r+1)$ square, whose surface-to-volume ratio is $\varepsilon = 4/(r+1)$. In general, we will want $F$ to have the lowest surface-to-volume ratio possible among all subgraphs of radius $r$.

Let $F'$ denote a subgraph that is isomorphic to $F$; on the torus, these subgraphs are precisely the translated copies of the square. Using public communication, the agents will decide whether or not to select each $F'$. This decision will be made independently for each such $F'$, including overlapping ones. The probability of selecting $F'$ will be chosen so that each agent will, with high probability, be a member of some selected $F'$. The edges removed are the boundary edges of all the selected $F'$s, with the selection probability being low enough so that the total number of edges removed is unlikely to be too large. The edges adjacent to agents who are not members of any selected $F'$ are also removed, making them singleton communities. The connected components of the remaining graph form the communities, and finally the public messages are also used to select the leader of each community. Figure~\ref{fig:local-torus} illustrates this rule.

\begin{figure}[h]
    \centering
\tikzset{every picture/.style={line width=0.75pt}} %set default line width to 0.75pt        

\begin{tikzpicture}[x=0.75pt,y=0.75pt,yscale=-1.2,xscale=1.2]
%uncomment if require: \path (0,444); %set diagram left start at 0, and has height of 444

%Shape: Rectangle [id:dp1874166500415836] 
\draw  [line width=0.75]  (101,100.02) -- (272,100.02) -- (272,239.52) -- (101,239.52) -- cycle ;
%Shape: Rectangle [id:dp327806334514835] 
\draw  [color={rgb, 255:red, 255; green, 0; blue, 0 }  ,draw opacity=1 ][line width=1.5]  (150,121.52) -- (231,121.52) -- (231,199.52) -- (150,199.52) -- cycle ;
%Shape: Rectangle [id:dp6282177178240108] 
\draw  [color={rgb, 255:red, 255; green, 0; blue, 0 }  ,draw opacity=1 ][line width=1.5]  (191,161.52) -- (272,161.52) -- (272,239.52) -- (191,239.52) -- cycle ;
%Straight Lines [id:da20479426522213073] 
\draw [color={rgb, 255:red, 255; green, 0; blue, 0 }  ,draw opacity=1 ][line width=1.5]    (140,100) -- (140,158.52) ;
%Straight Lines [id:da9228798956474163] 
\draw [color={rgb, 255:red, 255; green, 0; blue, 0 }  ,draw opacity=1 ][line width=1.5]    (141,158.52) -- (101,158.52) ;
%Straight Lines [id:da03614790399724499] 
\draw [color={rgb, 255:red, 255; green, 0; blue, 0 }  ,draw opacity=1 ][line width=1.5]    (231,100) -- (231,141.52) ;
%Straight Lines [id:da24899444430233852] 
\draw [color={rgb, 255:red, 255; green, 0; blue, 0 }  ,draw opacity=1 ][line width=1.5]    (272,141.52) -- (231,141.52) ;
%Straight Lines [id:da9000774110239854] 
\draw [color={rgb, 255:red, 255; green, 0; blue, 0 }  ,draw opacity=1 ][line width=1.5]    (170,171.52) -- (101,171.52) ;
%Straight Lines [id:da512331790592515] 
\draw [color={rgb, 255:red, 255; green, 0; blue, 0 }  ,draw opacity=1 ][line width=1.5]    (170,170.52) -- (170,239.52) ;
%Shape: Rectangle [id:dp9681636166451483] 
\draw  [draw opacity=0][fill={rgb, 255:red, 155; green, 155; blue, 155 }  ,fill opacity=0.15 ] (140,100) -- (231,100) -- (231,121.52) -- (140,121.52) -- cycle ;
%Shape: Rectangle [id:dp19714604703320915] 
\draw  [draw opacity=0][fill={rgb, 255:red, 155; green, 155; blue, 155 }  ,fill opacity=0.15 ] (140,121.52) -- (150,121.52) -- (150,170.52) -- (140,170.52) -- cycle ;
%Shape: Rectangle [id:dp8578930884095806] 
\draw  [draw opacity=0][fill={rgb, 255:red, 155; green, 155; blue, 155 }  ,fill opacity=0.15 ] (101,159.52) -- (140,159.52) -- (140,171.52) -- (101,171.52) -- cycle ;
%Shape: Rectangle [id:dp7257169175472128] 
\draw  [draw opacity=0][fill={rgb, 255:red, 155; green, 155; blue, 155 }  ,fill opacity=0.15 ] (170,200) -- (190,200) -- (190,239.52) -- (170,239.52) -- cycle ;
%Shape: Rectangle [id:dp41920843160711063] 
\draw  [draw opacity=0][fill={rgb, 255:red, 155; green, 155; blue, 155 }  ,fill opacity=0.15 ] (231,141.52) -- (272,141.52) -- (272,161.52) -- (231,161.52) -- cycle ;

% Text Node
\draw (112,121.4) node [anchor=north west][inner sep=0.75pt]  [font=\small]  {$C_{1}$};
% Text Node
\draw (177,137.4) node [anchor=north west][inner sep=0.75pt]  [font=\small]  {$C_{2}$};
% Text Node
\draw (242,114.4) node [anchor=north west][inner sep=0.75pt]  [font=\small]  {$C_{3}$};
% Text Node
\draw (115,211.4) node [anchor=north west][inner sep=0.75pt]  [font=\small]  {$C_{6}$};
% Text Node
\draw (151,178.92) node [anchor=north west][inner sep=0.75pt]  [font=\small]  {$C_{4}$};
% Text Node
\draw (203,172.4) node [anchor=north west][inner sep=0.75pt]  [font=\small]  {$C_{5}$};
% Text Node
\draw (241,211.4) node [anchor=north west][inner sep=0.75pt]  [font=\small]  {$C_{7}$};

\end{tikzpicture}

    \caption{In the equilibrium constructed in the proof of Theorem~\ref{thm:amenable-transitive}, agents choose at random which tiles should be selected. Communities are the intersections of the selected tiles. Agents outside the selected tiles form singleton communities. }
    \label{fig:local-torus}
\end{figure}
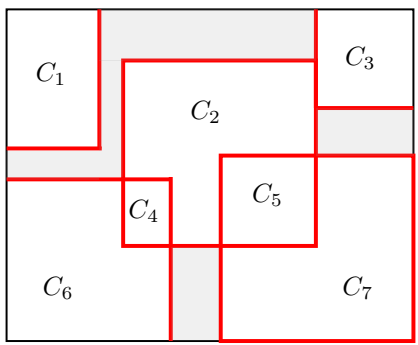

Note that since agents prefer not to be at the boundaries of communities, care needs to be taken to ensure that the local choice of communities is incentivized. Our main tool is the computer-science technique of ``secret-sharing'' \citep{shamir1979share}: when a group of agents $F'$ needs to choose at random with probability $p$ whether to select their subgraph, we have each send a public message $m_i$ chosen uniformly from $[0,1]$, add these messages, and select $F'$ if the fractional part of their sum is in $[0,p]$. It is easy to verify that as long as $F'$ is not a singleton, agents are indifferent between sending any message in $[0,1]$. Since we have to repeat this for many subgraphs $F'$, the public message $m_i$ will in fact have to be a tuple of many independent messages, each distributed uniformly on $[0,1]$.

We formalize the notion of graphs that ``look the same'' around each vertex as follows. We say that $G=(N,E)$ is $r$-\emph{locally-transitive} if the $r$-neighborhoods of any two vertices are isomorphic as rooted graphs. 
A strategy profile $\sigma$ is network-symmetric (under radius of communication $r$) if any such local isomorphism maps the strategy of one agent to the strategy of the other.   This notion captures the idea that all agents follow the same strategy, rather than having different roles in the strategy profile, avoiding the conceptual need for a central planner.\footnote{Formally, $G=(N,E)$ is $r$-locally-transitive if for any two vertices $i,j \in N$ there exists a bijection $\varphi: B_{r}(i) \to B_{r}(j)$ such that (1) $\varphi (i) = j$, and (2) the bijection preserves vertex adjacency: $(i,j)\in E \iff (\varphi(i), \varphi(j))\in E$. A strategy profile $\sigma$ is network-symmetric if for all $i,j \in N$ and every rooted graph isomorphism $\varphi: B_{r}(i) \to B_{r}(j)$ it holds that $\varphi(\sigma_i) =\sigma_j$, where we extend $\varphi$ to a map $\varphi \colon A_i \to A_j$  in the obvious way.}

\begin{theorem}
    \label{thm:amenable-transitive}
    Suppose $G$ is $(\eps,r)$-amenable and $(2r+1)$-locally transitive. Then there exists a network-symmetric, action-symmetric leader equilibrium with radius of communication $4r$, using both public and private communication, and having  and expected inefficiency at most $\varepsilon\left(1+\log\frac{1}{\eps} \right)$.
\end{theorem}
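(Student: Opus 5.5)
The plan is to combine the leader-equilibrium argument of Theorem~\ref{thm:amenable} with a randomized, locally generated partition built from copies of a single good ``tile''. There are four steps: pick the tile, generate the partition through public messages, bound the expected boundary, and check incentives.

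\textbf{Choosing the tile.} Let $(C_1,\dots,C_K)$ witness $(\eps,r)$-amenability. Then $\sum_k|\partial C_k|\le \eps|E|$, so the population-weighted average of $|\partial C_k|/|C_k|$ is at most $\eps|E|/|N|$. Hence some part $F:=C_k$ satisfies $|\partial F|/|F|\le \eps|E|/|N|$. This $F$ has radius at most $r$ around some center $c$, so $F\subseteq B_r(c)$. By $(2r+1)$-local transitivity, the set $\{j\in F:\ B_{r+1}(j)\text{ has the same rooted type}\}$ looks identical from every vertex. So every agent $i$ sees the same family of copies $F'$ of $F$ that are centered within distance $r$ of $i$ and contain $i$. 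By a double-counting argument over (copy, vertex) pairs, the number of copies containing a given vertex equals $|F|$ times the number of copies per center (after quotienting by the automorphisms of $B_r(c)$ fixing the copy). Neighborhoods of radius $2r+1$ are needed so that whether $F'$ is a copy, and what its boundary $\partial F'$ is, is locally determined and identical in type everywhere. This is what will make the strategy network-symmetric.

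\textbf{Random selection and the inefficiency bound.} Each copy $F'$ is selected independently with probability $p$, using secret sharing as in the text. Every member of $F'$ posts a uniform $[0,1]$ coordinate for $F'$, and $F'$ is selected iff the fractional part of the sum lies in $[0,p]$. We remove $\bigcup_{F'\text{ selected}}\partial F'$, together with all edges at vertices lying in no selected copy. The component of $i$ is contained in every selected copy containing $i$, so it has radius at most $2r$. Since each copy has diameter at most $2r$, every agent in the component, and every relevant copy, lies within $4r$ of $i$. This is why radius $4r$ suffices both to read all the relevant public messages and to reach the leader. Per vertex, the expected number of removed boundary edges is at most $p\,|\partial F|\cdot(\#\text{copies})/|N|$. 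This is of order $p|F|\cdot|\partial F|/|F|$, which is at most $p|F|\,\eps|E|/|N|$. A vertex is uncovered with probability $(1-p)^{|F|}\le e^{-p|F|}$, and each uncovered vertex costs at most its degree. Choosing $p|F|=\log(1/\eps)$ makes both contributions of order $\eps|E|$ and $\eps\log(1/\eps)|E|$. Each cut edge contributes at most $1$ to the inefficiency, as in Theorem~\ref{thm:amenable}, giving total $\eps(1+\log\frac1\eps)$. Some care with the degree normalization (regularity follows from local transitivity, so $|E|/|N|=\dmax/2$) is needed to get the constants exactly.

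\textbf{Leaders and incentives.} Each agent also posts a uniform random label. The leader of a community is its member with the smallest label, which every member can compute from messages within radius $4r$. The leader then privately sends a uniform bit to the members of its community, exactly as in Theorem~\ref{thm:amenable}. Action-symmetry and network-symmetry hold because all rules are defined through the local isomorphism type.

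The main obstacle is the incentive check at the communication stage. Agents prefer not to lie on boundaries, so an agent might try to manipulate which tiles are selected or who becomes leader. The first thing to try is to show that, with secret sharing, any single agent's coordinate for a non-singleton $F'$ leaves the selection event independent of everything that agent knows. Deviations in the public stage are then payoff-irrelevant, and the same argument applies to leader labels after a tie-breaking argument. What remains is the action stage, where the argument of Theorem~\ref{thm:amenable} applies verbatim: agents across a boundary are matched with probability one half, and agents cannot do better since other leaders' messages are private.
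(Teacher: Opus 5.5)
Your overall architecture matches the paper's: the tile $F$ taken from the witnessing partition, independent secret-shared selection of copies, cutting the boundaries of selected copies plus all edges at uncovered vertices, the extremal-label leader, radius $4r$, and indifference at the public stage. The step that fails is the calibration of $p$. A vertex is uncovered exactly when none of the copies containing it is selected, which has probability $(1-p)^M$. Here $M$ is the number of copies containing a given vertex; it is constant by $(2r+1)$-local transitivity, and double counting gives $|\mathcal{K}|=M|N|/|F|$ for the family $\mathcal{K}$ of copies. Your estimates implicitly set $M=|F|$ (equivalently $|\mathcal{K}|=|N|$). That holds for squares on the torus but is false in general.

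For example, take the Cartesian product of a long $n$-cycle with $K_m$ ($m\ge 3$), partitioned into blocks of $k=2r-1$ consecutive cliques. These blocks have radius $r$, $\eps=2/(k(m+1))$ and $|F|=km$. But the only copies of a block are the $n$ blocks, so $M=k$. With your choice $p|F|=\log\frac1\eps$, a vertex is uncovered with probability $(1-p)^k\ge 1-\frac{1}{m}\log\frac1\eps$, which tends to $1$ as $m\to\infty$. Almost every vertex is then a singleton, and the inefficiency is close to $1$, not $\eps(1+\log\frac1\eps)$.

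The fix is the paper's choice $p=\min\{\frac1M\log\frac1\eps,1\}$. Then $\E{|D_1|}\le p|\mathcal{K}|\,|\partial F| = pM|N|\,|\partial F|/|F|\le pM\eps|E|\le \eps\log\frac1\eps\,|E|$ and $(1-p)^M\le\eps$. The cap at $1$ also covers the case $M<\log\frac1\eps$, which your choice ignores. Note too that in the paper $E$ consists of ordered pairs, so $|E|=\dmax|N|$, not $\dmax|N|/2$.

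There is a second, smaller gap. In this theorem public and private messages are sent simultaneously in a single round. So when private messages go out, nobody yet knows the partition, whether they are a leader, or who their community members are, and the step ``the leader then privately sends a uniform bit to its community'' is not available. The paper handles this by having every agent send, for each candidate community $C$ in range, an independent uniform bit labeled by $C$ to exactly the members of $C$. Once the public messages reveal the partition and the leader, members follow the bit their leader labeled with their realized community. Because a bit labeled $C$ reaches only members of $C$, an agent still has no information about the recommendation followed by a neighbor in another community, which is what your action-stage argument needs.

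Relatedly, indifference over one's own leader label does not come from secret sharing, since labels are not shared. It comes from the fact that, given the partition, who leads a community does not affect anyone's payoff.
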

This theorem shows that efficient coordination is possible on an $(\eps,r)$-amenable graph even without reliance on a global partition.  Nevertheless, in our construction communication is less efficient, by a factor of $1+\log\frac{1}{\eps}$, and requires a larger radius of communication.

\section{Local correlated equilibria and the near-optimality of leader equilibria on cycle graphs}
\label{sec:cycles}

In this section we explore \emph{local correlated equilibria}, an alternative solution concept to coordination games on networks. As the name suggests, these equilibria are correlated equilibria, but where coordination is restricted to not exceed beyond local neighborhoods. This captures a similar spirit to our local messaging assumption, and indeed the Nash equilibria of the messaging game we have discussed so far will be special cases.

Let $N_1,N_2 \subset N$ be non-empty subsets of agents. We say they are $2r$-separated if the graph distance between every $i \in N_1$ and $j \in N_2$ is strictly greater than $2r$. Let $\mu$, a probability measure on $\{-1,+1\}^N$, be the joint distribution of the random variables $(a_i)_{i \in N}$. We say that $\mu$ is $r$-local if for every $N_1,N_2 \subset N$ that are $2r$-separated it holds that the random variables $(a_i)_{i \in N_1}$ and $(a_j)_{j \in N_2}$ are independent.\footnote{In the mathematics literature this is sometimes referred to as $2r$-dependent \citep{burton1993on, holroyd2016finitely}.} We say that $\mu$ is an $r$-local correlated equilibrium if, in addition, it is a correlated equilibrium. Action-symmetry is simple to define for $r$-local distributions: it merely requires that $\mu(a) = \mu(-a)$.

It is immediate that for any strategy profile with radius of communication $r$ in our messaging game that results in actions $(a_i)_{i \in N}$, the joint distribution of these actions is $r$-local. Likewise, if this strategy profile is an equilibrium then this joint distribution is an $r$-local correlated equilibrium. An interesting fact is that there are graphs  in which not every $r$-local $\mu$ arises in this way \citep[see, e.g.][]{burton1993on, holroyd2016finitely}.

In many ways, we find this solution concept more appealing than Nash equilibria of the messaging game. The downside is that we are generally unable to prove an analogue of Theorems~\ref{thm:non-amenable} and~\ref{thm:non-amenable-general} for this solution concept, although we do conjecture that results of this flavor should hold. Nevertheless, the particular case of cycle graphs turns out to be tractable.

Beyond cycle graphs---indeed, on any graph---Theorem~\ref{thm:amenable} shows that on an $(\eps,r)$-amenable graph there is an equilibrium with radius of communication $r$ that achieves inefficiency $\eps$. A natural question is: are there equilibria that are more efficient than leader equilibria? In general, we do not know the answer to this question. Theorem~\ref{thm:non-amenable} yields a lower bound, showing that if the best leader equilibrium has inefficiency $\eps$, then no strategy profile can achieve inefficiency less than $\eps^2/8$. We show that for the particularly simple example of cycle graphs leader equilibria are optimal. In fact, we show that no action-symmetric $r$-local strategy profile can achieve lower inefficiency.

In a cycle graph, the set of agents is $N=\{0,1,\ldots,n-1\}$, and $i$ is connected to $j$ if $i = j \pm 1 \mod n$. We assume that $n$ is much larger than the radius of communication ($n \geq 5r$ suffices), and, moreover, for simplicity, that the number of agents $n$ is a multiple of $2r+1$, so that we can divide the circle into segments of length $2r+1$ (and hence radius $r$) as in Figure~\ref{fig:Z-1d}. The average inefficiency of the corresponding leader equilibrium is $1/(2r+1)$, as we explain in \S~\ref{sec:amenable} above. 

Consider an action-symmetric, $r$-local $\mu \in \Delta(\{-1,+1\})^n$, and suppose that it is the distribution of  $(a_i)_{i \in N}$. These random variables take values in $\{-1,+1\}$, and all have expectation zero, by action-symmetry. Thus, they are vectors in $L_0^1$, the space of zero-mean random variables. This is a vector space, and indeed a Banach space under the metric  $d(X,Y) = \E{|X-Y|}$. Under this metric, each $a_i$ has unit norm.

Since the average inefficiency is $x$, there must be a sequence of agents $i,i+1,\ldots,i+2r+1 \mod n$ with average inefficiency $\frac{1}{2r+1}\sum_{j=i}^{i+2r}\E{|a_j-a_{j+1}|} \leq x$. Hence  $\sum_{j=i}^{i+2r}d(a_j,a_{j+1}) \leq (2r+1)x$. From the triangle inequality if follows that $d(a_i,a_{i+2r+1}) \leq (2r+1)x$. Since $\mu$ is $r$-local, $a_i$ and $a_{i+2r+1}$ are independent, and so $d(a_i,a_{i+2r+1})=1$. We have thus shown that $x \geq 1/(2r+1)$, and so no $r$-local $\mu$ can achieve inefficiency lower than the leader equilibrium.

\section{Conclusion}

\label{sec:general-app}

In this paper, we identify amenability as a geometric condition that captures the possibility of local coordination on a social network. Theorems~\ref{thm:amenable}, \ref{thm:amenable-no-comm} and \ref{thm:amenable-transitive} show that on amenable graphs it is possible to achieve low inefficiency, while Theorem~\ref{thm:non-amenable} shows that low inefficiency implies that a graph is amenable.

Theorem~\ref{thm:non-amenable-general} is a general result stating that if a finite graph admits local random variables that are highly correlated, then it must be amenable. Given the generality of this result, we expect it to have further applications. For example, if we consider a repeated game version of our setting, then agents' actions at time $t$ would depend on their neighbors in a ball of radius $t$, and so on non-amenable graphs high correlation is impossible. This still holds in even more general settings that go beyond pure coordination games. For example, agents could have idiosyncratic preferences over the actions, i.e., independent private types that influence their payoffs, making them prefer to coordinate on a particular action. 

A question that we leave unanswered is whether leader equilibria are always optimal. Equivalently, whether $(\eps,r)$-amenability coincides with having $\eps$ inefficiency in equilibria with a radius of communication $r$. For cycle graphs, we show that this holds in \S\ref{sec:cycles}, but the technique used there does not easily generalize. A possible avenue for tackling this problem could be through applying Shapley values to the $L^1$ norm, rather than the $L^2$ norm that we use. That is, letting the influence distribution be given by the Shapley values of the game $v'_X(S) = \Vert\E{X}{Z_S}\Vert_1$, rather than our current definition \eqref{eq:cooperative-game} of $v_X(S) = \Vert\E{X}{Z_S}\Vert_2^2$. For these Shapley values we do not know how to prove a Lipschitz result corresponding to Proposition~\ref{prop:l2l1}.

\bibliography{refs}

\appendix

\section{Stability of the Shapley value}
\label{sec:shapley-stability}

Proposition~\ref{prop:l2l1} shows a Lipschitz property of the Shapley influence distribution map. This motivates the question of the Lipschitz properties of the Shapley value.

Let $N$ be a finite set of players, and $\R^{2^N\setminus\emptyset}$ be the collection of cooperative games, equivalently, this is the collection $v\colon 2^N \to \mathbb{R}$ with $v(\emptyset) = 0$. The Shapley value is a map $\varphi \colon \R^{2^N\setminus\emptyset} \to \R^N$. Lipschitz properties of the Shapley value capture how stable it is to changes (or measurement errors) in the game. Let $v,w\in \R^{2^N\setminus\emptyset}$, and suppose $\vert v - w \vert_\infty = \max_S |v(S)-w(S)|$ is small. How different are the Shapley values of $v$ and $w$?

There can be a number of ways to quantify the difference between the Shapley values. One is via the sup-norm: $$|\varphi(v)-\varphi(w)|_{\infty} = \max_i|\varphi_i(v)-\varphi_i(w)|.$$ Another is by the total variation norm 
\begin{align*}
    \tv{\varphi(w),\varphi(v)} = \frac{1}{2}|\varphi(w)-\varphi(v)|_1.
\end{align*}
When the former is small, the error in the value of each player is small. When the latter is small, the error in each group of players is small, which is a stronger notion. Proposition~\ref{prop:l2l1} yields the latter type estimate, although in a restricted setting that does not apply to all games. 

Since $\varphi$ is a linear map, understanding the deviations around a game $v$ is the same as understanding the deviations around zero. We thus define the operator norms
\begin{align*}
    \Vert \varphi \Vert_{\infty\to \infty} = \max\{|\varphi(v)|_\infty\,:\, |v|_\infty \leq 1 \}
\end{align*}
and
\begin{align*}
    \Vert \varphi \Vert_{\infty\to 1} = \max\{|\varphi(v)|_1\,:\, |v|_\infty \leq 1 \}.
\end{align*}

It is easy to see that $\Vert \varphi \Vert_{\infty\to \infty}\leq 2$; this follows immediately from the the definition of the Shapley value. One can interpret this fact as a demonstration of the stability of the Shapley value in the sup-norm sense: if we change $v$ by at most $\eps$ in each coordinate, the Shapley value of any player will not change by more than $2\eps$. In contrast, \cite{kumabe_et_al:LIPIcs.ICALP.2024.102}  show that the Shapley value can be far from stable in the total-variation sense: a small change in the game can cause a group of players to have a very large change in their total value---when the number of players is large. Specifically, they show that there's a universal constant $C>0$ such that $\Vert \varphi \Vert_{\infty\to 1}\geq C \log |N|$, and in particular $\Vert \varphi \Vert_{\infty\to 1}$ goes to infinity as the set of players becomes larger. This result also contrasts Proposition~\ref{prop:l2l1}, in which the bound is independent of $|N|$.

\section{Proof of Theorem~\ref{thm:amenable-no-comm}}
\label{app:amanble-no-comm}

Since $G=(N,E)$ is $(\eps,r)$-amenable, by Proposition~\ref{prop:stable} there exists a set of edges $D \subseteq E$ such that $|D| \leq \eps\dmax|E|$, $N=C\cup U$, where $C$ is the union of disjoint stable sets $(C_1,C_2,\ldots)$, and every $i\in U$ is a singleton vertex in the graph $(N,E \setminus D)$. Assign to each community $C_k$ a leader $\ell_k\in C_k\cup U$ such that $C_k \subseteq B_r(\ell_k)$. Denote the set of leaders by $L$. If $\ell_k\not\in C_k$, we call $\ell_k$ a banished leader. 

Let $C = \cup_k C_k$ so that $N$ is the disjoint union of $C$, the agents who are members of communities, and $U$, the agents who are not. We will construct our action-symmetric equilibrium by fixing the behavior of the agents in $C$, and relying on an existence result to extend the strategy profile to the agents in $U$. For both groups, we assume that all private messages are empty, and any private messages are ignored, so that there is no incentive to send private messages.

We start by specifying the behavior of the agents in $C$. For each agent $\in N$, let $L_i$ denote the set of leaders within radius $r$, i.e. $L_i = L \cap B_r(i)$. Our off-path rule specifies that if any agent observes a leader sending a public message outside $\{-1, +1\}$, then agents randomize uniformly. This deters leaders from sending off-path messages. 

The strategy of each leader $\ell_k\in C_{k}$ is as follows:
\begin{enumerate}
    \item choose a public message $\pub_{\ell_k}$ uniformly at random from $\{-1,+1\}$;
    \item take action $a_{\ell_k}=\pub_{\ell_k}$ if $\pub_{\ell}\in \{-1, +1\}$ for every  $\ell \in L_{\ell_k}$;

    \item in the (off-path) event that $\pub_{\ell}$ is not in $\{-1,+1\}$ for some  $\ell \in L_{l_k}$, choose the action $a_{\ell_k}$ uniformly at random from $\{-1,+1\}$.
\end{enumerate}
Fix the strategy of any non-leader  $i \in C_k$ to
\begin{enumerate}
    \item send an empty public message;
    \item take action $a_{i}=\pub_{\ell_k}$ if $\pub_{\ell}\in \{-1, +1\}$ for every  $\ell \in L_{i}$;
    \item in the (off-path) event that $\pub_{\ell}$ is not in $\{-1,+1\}$ for some $\ell \in L_i$, choose the action $a_{i}$ uniformly at random from $\{-1,+1\}$.
\end{enumerate}

Thus, members of communities follow a leader equilibrium using the leader's public messages (as opposed to the private ones used in Theorem~\ref{thm:amenable}), and ignore any off-path messages. Note that these strategies are action-symmetric.

Given that the agents in $C$ follow these strategies, it is not immediately obvious what the best responses are of the agents in $U$. The issue is that depending on the graph structure and the set $D$, it could be that an agent in $U$  receives the public messages of many leaders, and that their neighbors, who are also in $U$, likewise see the same messages. In this case, they could, for example, coordinate to follow the message of one particular leader, or alternatively follow another one. To address this problem, we do not construct their strategies explicitly, but instead apply an existence result.

We define a finite normal form game played by the agents in $U$ at the action stage, conditional on the realized public messages of leaders.  A pure strategy of $i$ is a map $s_i \colon \{-1,+1\}^{L_i} \to \{-1,+1\}$, interpreted as the action $i$ will take in the original game when observing a given tuple of public messages from the leaders in their radius of communication. Given a pure strategy profile $s$, the expected utility of the agents is the expectation of \eqref{eq:one-shot}, their utility in the original game, where the agents in $C$ are dummy players playing as described above, and the agents in $U$ use their strategies in $s$ to choose their actions as a function of the public messages sent by the leaders.

The map $\iota$ defining action-symmetry is what \cite{Nash1951} calls an \emph{automorphism} or \emph{symmetry} of this game, as it preserves payoffs: $u_i(\iota(s_1),\ldots,\iota(s_n)) = u_i(s_1,\ldots,s_n)$. He shows (Theorem 2, p.\ 289) that given an automorphism, there exists a mixed equilibrium that is invariant to the automorphism, which in our setting precisely translates to action-symmetry. Thus, we have an action-symmetric equilibrium $s^*$ of this restricted game of the agents in $U$. 

The agents in $U$ are either banished leaders or non-leaders. For each leader $\ell \in U$ we define a strategy in the original game as follows:
\begin{enumerate}
    \item choose a public message $\pub_{\ell}$ uniformly at random from $\{-1,+1\}$;
    \item if the public messages of the leaders in $L_{\ell}$ are all in $\{-1,+1\}$, take the action specified by applying $s^*_{\ell}$ to these messages (note that $\ell \in L_\ell)$;
    \item in the (off-path) event that at least one leader in $L_\ell$ sends a public message not in $\{-1,+1\}$, choose an action uniformly at random. 
\end{enumerate}

For each non-leader $i \in U$ we define a strategy in the original game as follows:
\begin{enumerate}
    \item send an empty public message;
    \item if the public messages of the leaders in $L_i$ are all in $\{-1,+1\}$, take the action specified by applying $s^*_i$ to these messages;
    \item in the (off-path) event that at least one leader in $L_i$ sends a public message not in $\{-1,+1\}$, choose an action uniformly at random. 
\end{enumerate}

We first show that agents in $C$ are best responding. Non-leaders are best responding because of the stability property of the communities: they have at least as many neighbors in their own community, and these will take the action recommended by the leader, and so no improvement is possible. Deviations other than not following the leader---i.e., sending non-empty public or private messages---are easily ruled out, as all such messages are ignored.

To verify that leaders belonging to the communities are best responding, we need to check that they have no incentive to deviate from choosing their public messages uniformly from $\{-1,+1\}$ and instead choosing (say) $+1$. All other deviations (including sending a public message not in $\{-1,+1\}$ and taking an action other than the message) are again easy to rule out.

Fix a leader $\ell \in C$. If $\ell$ has no neighbors outside its community then clearly there is no profitable deviation, since $\ell$ is guaranteed to coordinate with all its neighbors. Otherwise, let $j$ be a neighbor of $\ell$, which is not a member of his community. If $j \in C$ then $j$ ignores $\ell$'s messages, and there is no incentive to deviate. Suppose $j \in U$. The action of this agent is a (random) function $s^*_j$ of the messages of the leaders in $B_r(j)$, that is $a_j = s^*_j(\pub_{\ell}, \pub_{B_r(j)\backslash \{\ell\}})$. Thus, the probability that $j$ chooses $a_j=\pub_\ell$ is 
$$\Pr{a_j = \pub_\ell} = \Pr{s_j^*(\pub_{\ell}, \pub_{B_r(j)\backslash \{\ell\}}) = \pub_\ell}.$$
Conditioned on sending the message $\pub_\ell=+1$, this probability is
\begin{align*}
\Pr{a_j = \pub_\ell}{\pub_\ell=+1} 
&= \Pr{s_j^*(+1, \pub_{B_r(j)\backslash \{\ell\}}) = +1}{\pub_\ell=+1}\\
&= \Pr{s_j^*(+1, \pub_{B_r(j)\backslash \{\ell\}}) = +1},
\end{align*}
where the second equality is a consequence of the fact that $\pub_\ell$, $\pub_{B_r(j)\backslash \{\ell\}}$ and $s_j^*$ are all independent, as the randomization in the choice of mixed strategy is independent of what occurs in the game.

Now, $s_j^*$ and $\hat s_j^*=\iota(s_j^*)$ have the same distribution by action-symmerty. Thus, and since $s_j^*$ is independent of the public messages,
$$\Pr{a_j = \pub_\ell}{\pub_\ell=+1} = \Pr{\hat s_j^*(+1, \pub_{B_r(j)\backslash \{\ell\}}) = +1}.$$
It follows from the definition of $\iota$ that
\begin{align*}
    \Pr{a_j = \pub_\ell}{\pub_\ell=+1} = \Pr{s_j^*(-1, -\pub_{B_r(j)\backslash \{\ell\}}) = -1}.
\end{align*}
Since $\pub_{B_r(j)\backslash \{\ell\}}$ and $-\pub_{B_r(j)\backslash \{\ell\}}$ have the same distribution (and are independent of $s_j^*$) it follows that
\begin{align*}
    \Pr{a_j = \pub_\ell}{\pub_\ell=+1} = \Pr{s_j^*(-1, \pub_{B_r(j)\backslash \{\ell\}}) = -1},
\end{align*}
which is equal to the probability that $a_j = \pub_\ell$ conditioned on $\pub_\ell=-1$. Thus, the probability that $j$ decides to follow the leader's message is independent of the message, and mixing is a best response for the leader.

Now we argue that agents in $U$ are best responding in the original game. First, notice that non-leaders in $U$ are indifferent between all public messages because their messages are ignored by all other agents. Second, leaders in $U$ are indifferent between all public messages in $\Delta (\{-1, +1\})$ by a similar argument as for non-banished leaders. Furthermore, sending any other message is weakly dominated, since under this deviation the neighbors of the leader will choose their actions uniformly at random. Thus, there exists an obvious lifting of the strategy $s^*_i$ to the original game, in which banished leaders choose a message uniformly at random from $\{-1, +1\}$. It follows immediately from the equilibrium property of $s^*$ that agents in $U$ are best responding on path. This completes the proof of Theorem~\ref{thm:amenable-no-comm}.

\section{Proof of Theorem~\ref{thm:amenable-transitive}}
\label{app:network-symmetric}

An important tool in the proof of Theorem~\ref{thm:amenable-transitive} is \emph{secret-sharing}, which allows us to use public messages to choose for each subgraph $F'$ of radius at most $r$ an independent, uniform $[0,1]$ random variable, in an incentive compatible way. 

Let $\nu$ be the probability measure on the message space $\cM = \cA^\N$ given by choosing each letter independently, from the distribution that assigns probability $1/2$ to $\emptyset$ and probability $1/4$ to both $+1$ and $-1$. Given a message $\pub_i = (\alpha_1,\alpha_2,\ldots)$ chosen from $\nu$, we can map it to a message distributed uniformly on $[0,1]$ by $f(\pub_i) = \sum_{n=1}^\infty 2^{-n}v(\alpha_n)$, where $v(\emptyset) = 0$, $v(-1)=v(+1)=1$. This mapping is invariant to negation, i.e., $f(-\pub_i)=f(\pub_i)$. By a similar construction, we can in fact map $\pub_i$ to a sequence $(\pub_i^1,\pub_i^2,\ldots)$ of i.i.d.\ uniform $[0,1]$ messages. For our purposes we will not need an infinite sequence, and instead will map $\pub_i$ to a tuple $(\pub_i^0,(\pub_i^{F'})_{F'})$, where $F'$ ranges over all radius most $r$ connected subgraphs of $G$ that contain $i$ and at least one other agent. That is, we interpret a message as consisting of an initial number $\pub_i^0$ and another number $\pub_i^{F'}$ for each non-trivial subgraph $i$ participates in.
\begin{lemma}
    \label{lem:secret-sharing}
    Suppose that each agent $i$ sends a public message $\pub_i = (\pub_i^0,(\pub_i^{F'})_{F'})$ as above. For each connected subgraph $F'=(V',E')$ of radius at most $r$ and having at least two vertices, let $Z_{F'} = \sum_{i \in V'}\pub_i^{F'} \mod 1$. Then the collection $Z_{F'}$ is i.i.d.\ uniform on $[0,1]$, and furthermore it is i.i.d.\ uniform on $[0,1]$ even when conditioned on any message $\pub_i$.
\end{lemma}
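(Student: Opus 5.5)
The approach is to reduce everything to one elementary fact. If $U$ is uniform on $[0,1]$ (identified with the circle $\R/\mathbb{Z}$) and independent of a random variable $W$, then $U+W \bmod 1$ is uniform and independent of $W$. To see this, condition on $W=w$: translation by $w$ preserves Lebesgue measure on the circle, so the conditional law of $U+w \bmod 1$ is uniform for every $w$. Throughout I take as given the construction preceding the lemma. Under $\nu$, each message $\pub_i$ is mapped to a tuple $(\pub_i^0,(\pub_i^{F'})_{F'})$ of i.i.d.\ uniform $[0,1]$ coordinates. Across agents the messages are independent, so the whole family $\{\pub_i^{F'}\}_{i,F'}$, together with the $\pub_i^0$, is i.i.d.\ uniform. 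Since the number of relevant subgraphs $F'$ is finite, only finitely many coordinates are involved. This can be arranged, for instance, by using disjoint infinite subsequences of letters for each coordinate.

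The first step is to show that the $Z_{F'}$ are i.i.d.\ uniform. Each $Z_{F'}$ is a function only of the block $(\pub_i^{F'})_{i\in V'}$. Blocks indexed by different $F'$ use disjoint sets of coordinates, so the $Z_{F'}$ are independent. Each $Z_{F'}$ is uniform by the elementary fact: fix any $j\in V'$, take $U=\pub_j^{F'}$ and $W=\sum_{i\in V'\setminus\{j\}}\pub_i^{F'}$.

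The second step handles conditioning on a single message $\pub_k$. Conditioning on $\pub_k$ fixes $\pub_k^{F'}$ for every $F'$ containing $k$, and it leaves all other coordinates i.i.d.\ uniform and independent of $\pub_k$. For $F'\not\ni k$, the variable $Z_{F'}$ is independent of $\pub_k$, so its conditional law is unchanged. For $F'\ni k$, the subgraph $F'$ has at least two vertices, so I pick some $j\in V'$ with $j\neq k$. Then $Z_{F'}=\pub_j^{F'}+W_{F'} \bmod 1$, where $W_{F'}$ collects the other terms, including the now-fixed $\pub_k^{F'}$. Conditionally on $\pub_k$, the variable $\pub_j^{F'}$ is still uniform and independent of $W_{F'}$ and of every other block. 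So, conditionally on $\pub_k$, the $Z_{F'}$ remain functions of disjoint independent blocks, and each is uniform by the elementary fact. This is exactly where the hypothesis of at least two vertices is used: a singleton $F'=\{k\}$ would be revealed by $\pub_k$.

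I expect the main obstacle to be measure-theoretic bookkeeping rather than anything conceptual. One point is verifying that the map from $\nu$-distributed strings to the tuple of coordinates really produces independent uniforms; for each coordinate the letter values $v(\alpha_n)$ are i.i.d.\ fair bits, so this holds. The other is making the conditioning on an uncountable-valued $\pub_k$ rigorous. I would handle the latter by independence directly: show that $(Z_{F'})_{F'}$ is independent of $\pub_k$ and has i.i.d.\ uniform law, which gives the conditional statement by regular conditional distributions. To prove this independence, I would compute the joint law of $(\pub_k,(Z_{F'})_{F'})$ using Fubini, integrating first over the chosen coordinate $\pub_j^{F'}$ for each $F'\ni k$.
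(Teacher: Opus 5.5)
Your proposal is correct and follows the paper's route. The paper calls the lemma immediate from the translation invariance of the uniform distribution mod $1$, and your argument expands exactly that, making explicit two points the paper leaves unstated: the blocks $(\pub_i^{F'})_{i\in V'}$ are disjoint across different $F'$, and the at-least-two-vertices hypothesis is what lets you pick $j\neq k$. Also, because different agents' messages are independent, your argument holds for every fixed value of $\pub_k$, not just almost every value. That pointwise version is the one the incentive argument needs, so you can skip the regular-conditional-distribution bookkeeping entirely.
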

The proof is immediate, relying on the fact that if $X$ is uniform on $[0,1]$, then so is $X+a \mod 1$ for any $a \in \R$.

Given this lemma, we can define our equilibrium strategies. We fix a connected subgraph $F$ of radius at most $r$, to be determined later. Given a subgraph $F'$, we write $\bar F'$ for the subgraph $F'$, together with the edges in $\partial F'$, and the nodes contained in these edges. I.e., $\bar F'$ is $F'$, expanded to include the nodes connected to it. We call it the expanded graph of $F'$.

%Denote by $\mathcal{K}$ the collection of graphs $\bar F'$ that are isomorphic to $\bar F$.

All agents will send public messages $\pub_i = (\pub_i^0,(\pub_i^{F'})_{F'})$ as above. As in Lemma~\ref{lem:secret-sharing}, let $Z_{F'} = \sum_{i \in V'}\pub_i^{F'} \mod 1$. We fix a probability $p \in [0,1]$, also to be determined later. 

Given this, we define a random set of edges $D = D_1 \cup D_2$, as follows. Let  $F'$ be a subgraph isomorphic to $F$ and such that $\bar F'$ is isomorphic to $\bar F$. We say that $F'$ is \emph{selected} if $Z_{F'} \leq p$. Lemma~\ref{lem:secret-sharing} thus implies that the event that various $F'$s are selected are independent, have probability $p$, and moreover are still independent and have probability $p$ conditioned on any messages sent by an agent $i$.

The set $D_1$ is the union of all the boundaries of the selected subgraphs. The set $D_2$ is the set of all edges of all agents $i$ that do not belong to any selected $F'$. Let $C_1,C_2,\ldots$ be the connected components of the social network graph, with the set of edges $D=D_1\cup D_2$ removed: $(V,E\setminus D)$. We say that $\ell_k \in C_k$ is the leader of $C_k$ if $m_{\ell_k}^0$ maximizes $\{m_i^0\,:\, i \in C_k\}$. I.e., the leader of each community is the member whose first-coordinate message is highest. 

Note that each agent, after observing the public messages, knows which community $C_k$ they are in, and who its leader is, since the radius of communication is $4r$: this radius allows players to know whether they belong to each selected set, as well as  which other selected sets it intersects. This radius also allows leaders to send private messages to all the members of their communities. Indeed, they also send uniform $\{-1,+1\}$ private messages to their community members. A subtlety here is that agents do not know ahead of time whether they will be leaders, and if so who the members of their community will be. Accordingly, each agent sends an independent private message to each potential community: for each $C \subseteq B_r(i)$, agent $i$ sends a independently chosen  recommendation in $\{-1,+1\}$ to all agents in $C$, with this recommendations labeled by the set $C$. Since our message space is rich, there is no issue is accommodating all of this information in the private messages. 

In the last stage, members follow the recommendation sent by their community's leader to their community, and ignore the remaining messages. As in previous constructions, all private messages in an incorrect format are ignored and lead to actions being chosen uniformly at random. Public messages in an unexpected format are likewise ignored and result in the same choice of actions.

Regardless of the choice of $F$ and $p$, this strategy profile is an equilibrium. That there is no incentive to deviate from the public message distribution follows from Lemma~\ref{lem:secret-sharing}. The remaining elements of the profile are as in previously discussed leader equilibria, and hence again admit no profitable deviations. We also note that this strategy profile is action-symmetric. It is also network-symmetric, as its definition does not involve the identity of an agent, or their location in the graph.

The efficiency of this equilibrium depends on the choice of $F$ and $p$. To complete the proof of Theorem~\ref{thm:amenable-transitive}, we show that we can choose $F$ and $p$ to get the required efficiency.

Since the network is an $(\varepsilon,r)$-amenable graph, it can be (deterministically) partitioned into connected communities $\{C_1, ..., C_K\}$ of radius at most $r$, and so that  the set of edges $E'$ connecting different communities is of size $|E'| \leq \varepsilon |E|$. Then, 
$$\frac{|E'|}{|N|} = \frac{1}{|N|}\sum_{k}|\partial C_k| = \frac{1}{|N|}\sum_{k} \frac{|\partial C_k|}{|C_k|}|C_k|,$$
which is the average surface-to-volume ratio weighted by component size. Thus, there exists at least one component $C$ such that
$$\frac{|\partial C|}{|C|} \leq \frac{|E'|}{|N|} \leq \frac{\varepsilon|E|}{|N|} = \varepsilon d_{max},$$
where the equality uses vertex transitivity (so every vertex has degree $d_{max}$) and $E$ is defined as the set of ordered pairs of neighbors $(i, j)$.
We let $F$ be this connected component. %Let $n$ be the number of subgraphs $F'$  such that $\bar F'$ is isomorphic to $\bar F$, and denote by $|F'|$ the number of vertices in $F'$. 

Let $\mathcal{K}$ denote the family of subgraphs $F'$ of $G$ that are isomorphic to $F$, and such that $\bar F'$ is isomorphic to $\bar F$. Let $M_i$ be the number of subgraphs  in $\mathcal{K}$ that contain $i$.
%Let $\mathcal{K}$ denote the family of vertex subsets of $N$, which induce a subgraph isomorphic to $F$ and whose expanded subgraph is isomorphic to $\bar F$.
Let $M_i$ be the number of subgraphs in $\mathcal{K}$ that contain $i$. Since $G$ is $(2r+1)$-locally transitive, every $F' \in \mathcal{K}$ that contains $i$---and its expansion $B_1(i)$---is contained in the ball of radius $2r+1$ around $i$. Hence this number is the same for every vertex; denote it by $M := M_i$. Note that $|\mathcal{K}| = M |N|/|F|$.

Let 
\begin{align*}
    p = \min\{\frac{1}{M}\log\frac{1}{\varepsilon}, 1\}.
\end{align*}
We end the proof of Theorem~\ref{thm:amenable-transitive} by showing that with this choice of $F$ and $p$ we have the required inefficiency.

Let $\mathcal{K}'$ be a random subgraph of $\mathcal{K}$ obtained by selecting each $K \in \mathcal{K}$ independently with probability $p$. The expected number of boundary edges of the selected subgraphs, $
D_1 = \bigcup_{K \in \mathcal{K}'} \partial K$,
is bounded above by the expected number of selected sets times the number of boundary edges in each, that is
\[
\mathbb{E}[|D_1|] \le p \cdot |\mathcal{K}| \cdot |\partial F| 
= p \cdot \frac{|E|\cdot M}{d_{max}\cdot |F|} \cdot |\partial F| 
\leq  p \cdot M \cdot \varepsilon \cdot |E|.
\]

Next, the probability that a vertex $i$ is not contained in any selected $K \in \mathcal{K}'$ equals
\[
\mathbb{P}[\,i \notin K,\, \forall K \in \mathcal{K}'\,] = (1-p)^M.
\]

Given that each vertex has degree $d_{max}$, the expected number of edges whose both endpoints do not belong to any selected $K$ --- denoted by $D_2$ --- is bounded by
\[
\mathbb{E}[|D_2|] \le (1-p)^M \cdot d_{max} \cdot |N| = (1-p)^M \cdot |E|.
\]

Therefore, the expected total number of deleted edges,
$
D= D_1 \cup D_2$,
satisfies
\[
\mathbb{E}[|D|] \le p \cdot M \cdot \varepsilon \cdot |E| 
+ (1-p)^M \cdot |E|.\]

Since $p = \min\{\frac{1}{M}\log\frac{1}{\varepsilon}, 1\}$, we obtain the desired bound on the number of deleted edges. Particularly, if $\frac{1}{M}\log\frac{1}{\varepsilon} \le 1$, then 
\[
(1 - p)^M = \Bigl(1 - \frac{1}{M}\log\frac{1}{\varepsilon}\Bigr)^{M} \le e^{-\log(1/\varepsilon)} = \varepsilon.
\]
Otherwise, $p = 1$, in which case $(1 - p)^M = 0 \le \varepsilon$. This completes the proof of Theorem~\ref{thm:amenable-transitive}.

\begin{remark}
    \cite*{benjamini2008every} apply a similar technique to general graphs (rather than just to $r$-transitive ones, as we do) with the goal of showing that $(\eps,r)$-amenability---i.e., the existence global, deterministic partition into radius at most $r$ connected communities using an $\eps$ fraction of the edges---implies the existence of a local randomized rule that generates the same kind of partition, but (as in our case) using more edges. Their rule is somewhat involved, making it  less plausible as a mechanism for coordination.
\end{remark}

\end{document}